\documentclass[copyright,creativecommons]{eptcs}
\usepackage{breakurl}
\usepackage{xcolor}
\usepackage{epsfig}
\usepackage{amsfonts,amsthm}
\usepackage{latexsym,amsmath,amssymb}
\usepackage{prooftree}
\usepackage{listings}
\usepackage{enumerate}

\newcommand{\qqop}[1]{\mathrel{\makebox[2em]{$#1$}}}

\newif\ifsubmit
\submitfalse
\submittrue
\ifsubmit
\newcommand{\AT}[1]{#1}
\newcommand{\ATComm}[1]{}
\newcommand{\MD}[1]{#1}
\newcommand{\MDComm}[1]{}
\newcommand{\PG}[1]{#1}
\newcommand{\PGComm}[1]{}
\newcommand{\LB}[1]{#1}
\newcommand{\LBComm}[1]{}
\else
\newcommand{\AT}[1]{\textcolor{blue}{#1}}
\newcommand{\ATComm}[1]{{\scriptsize \textcolor{blue}{[Angelo{:} #1]}}}
\newcommand{\MD}[1]{\textcolor{red}{#1}}
\newcommand{\MDComm}[1]{{\scriptsize \textcolor{red}{[Mariangiola{:} #1]}}}
\newcommand{\PG}[1]{\textcolor{magenta}{#1}}
\newcommand{\PGComm}[1]{{\scriptsize \textcolor{magenta}{[Paola{:} #1]}}}
\newcommand{\LB}[1]{\textcolor{darkgreen}{#1}}
\newcommand{\LBComm}[1]{{\scriptsize \textcolor{darkgreen}{[Livio{:} #1]}}}\fi
\definecolor{darkgreen}{rgb}{0,0.5,0}

\newcommand{\pat}{pattern}

\newcommand{\DNA}{{\sf DNA}}
\newcommand{\GENE}{{\it g}}
\newcommand{\Rg}{{\it R_g}}
\newcommand{\RmR}{{\it R_m}}
\newcommand{\Rmic}{{\it R_a}}
\newcommand{\Rmu}{{\it R_{m\uparrow}}}
\newcommand{\Rpdo}{{\it R_{o\downarrow}}}
\newcommand{\Rpdi}{{\it R_{i\downarrow}}}
\newcommand{\Rau}{{\it R_{a\uparrow}}}

\newcommand{\up}[2]{#1^{\uparrow #2}}
\newcommand{\down}[2]{#1^{\downarrow #2}}

\newcommand{\Lr}{L}
\newcommand{\RR}{\mathcal{R}}
\newcommand{\CC}{E\!\!\!\!\!E}

\newcommand{\ltrans}[1]{\xrightarrow{}}
\newcommand{\ttrans}[1]{\rightarrow_{\intercal}}

\newcommand{\set}[1]{\{#1\}}

\newcommand{\EE}{\mathcal{E}}

\newcommand{\genVar}{\chi}

\newcommand{\SALTA}[1]{}

\newcommand{\PP}{\mathcal{P}}

\newcommand{\XX}{\mathcal{X}}
\newcommand{\TV}{\mathcal{TV}}
\newcommand{\SV}{\mathcal{SV}}
\newcommand{\TT}{\mathcal{T}}

\newcommand{\SSq}{\mathcal{S}}

\newcommand{\VV}{\mathcal{V}}
\newcommand{\xx}{\widetilde{x}}
\newcommand{\yy}{\widetilde{y}}
\newcommand{\zz}{\widetilde{z}}

\newcommand{\Ltrans}[1]{\Longrightarrow}
\newcommand{\Ttrans}[1]{\Longrightarrow_\intercal}
\newcommand{\Loop}[1]{\left(#1\right)^{\LB{\circlearrowleft}}}

\newcommand{\into}{\ensuremath{\,\rfloor}\,}
\newcommand{\pipe}{\ensuremath{\;|\;}}
\newcommand{\phole}{\square}

\newcommand{\df}{{\sf d}}
\newcommand{\rf}{{\sf r}}
\newcommand{\s}{{\sf s}}
\newcommand{\ef}{{\sf e}}
\newcommand{\fif}{{\sf i}}
\newcommand{\of}{{\sf o}}

\newcommand{\derSeq}{\vdash_s}
\newcommand{\derPat}{\vdash_p}
\newcommand{\derGR}{\vdash_{gr}}

\newcommand{\TypeSeq}[3]{#1\derSeq #2:#3}
\newcommand{\TypePat}[3]{#1\derPat #2:#3}
\newcommand{\TypeGR}[3]{#1\derGR #2:{\tt ok}}
\newcommand{\GenericVariable}{\chi}
\newcommand{\T}{\Delta}
\newcommand{\La}{\Lambda}

\newcommand{\tSeq}{\varphi}
\newcommand{\tPat}{\tau}


\newcommand{\Rtunion}{\sqcup}
\newcommand{\Rtsub}{\sqsubseteq}
\newcommand{\feature}{{\sf features}}

\newcommand{\emptyseq}{\emptyset}
\newcommand{\conc}[2]{#1\cdot#2}

\newcommand{\upRule}[4]{\up{#1}{#2}\! \mapsto \! \up{#3}{#4}}

\newcommand{\agr}{\quad\big|\quad}

\newcommand{\mycdot}{\!\cdot\!}

\newcommand{\lab}[1]{\ensuremath{\scriptstyle{\textsc{(#1)}}}}

\newcommand{\frTT}{\underline{\TT}}
\newcommand{\freeze}[1]{\underline{#1}}
\newcommand{\defrost}{\eta}
\newcommand{\labl}{{\it l}}

\newtheorem{theorem}{Theorem}[section]
\newtheorem{definition}[theorem]{Definition}

\newtheorem{lemma}[theorem]{Lemma}
\newtheorem{example}[theorem]{Example}

\title{A Calculus of Looping Sequences with Local Rules\thanks{This research is founded by the BioBITs Project (\emph{Converging Technologies} 2007, area: Biotechnology-ICT), Regione Piemonte.}}

\author{Livio Bioglio \institute{Dip. di Informatica, Univ. di Torino, Italy}
\and Mariangiola Dezani-Ciancaglini
\institute{Dip. di Informatica, Univ. di Torino, Italy} \and Paola
Giannini \institute{Dip. di Informatica, Univ. del Piemonte
Orientale, Italy}
 \and Angelo Troina
\institute{Dip. di Informatica, Univ. di Torino, Italy}
 }

\begin{document}
\maketitle

\begin{abstract}
In this paper we present a variant of the Calculus of Looping
Sequences (CLS for short) with global and local rewrite rules. While
global rules, as in CLS, are applied anywhere in a given term, local
rules can only be applied in the compartment on which they are
defined. Local rules are dynamic: they can be added, moved and
erased. We enrich the new calculus with a parallel semantics where a
reduction step is lead by any number of global and local rules that
could be performed in parallel. A type system is developed to
enforce the property that a compartment must contain only local
rules with specific features. As a running example we model some
interactions happening in a cell starting from its nucleus and
moving towards its mitochondria.
\end{abstract}

\section{Introduction}

The Calculus of Looping Sequences (CLS for
short)~\cite{BarMagMilTro06a,BarMagMilTro06b,Mil07,BMMT07}, is a formalism for describing
biological systems and their evolution. CLS is based on term
rewriting with a set of predefined rules modelling the activities
one would like to describe. CLS terms are constructed by starting from a set of basic constituent elements which are composed with operators of sequencing, looping, containment and parallel composition. Sequences may represent DNA fragments and proteins, looping sequences may represent membranes, parallel composition may represent juxtaposition of elements and populations of chemical species.

The model has been extended with several
features such as bisimulations \cite{BarMagMilTro06b,BarMagMilTro08},
combining the
simplicity of notation of rewrite systems with the advantage of a
form of compositionality.
In~\cite{ADT08,BDMMT10} a type system was defined to ensure the well-formedness of links between protein sites within the Linked Calculus of Looping Sequences (see~\cite{BarMagMil06}). In~\cite{DGT09a} we defined a type system to guarantee the soundness of reduction rules with respect to the requirement of certain elements, and the repellency of others.

In this paper we present a variant of CLS with global and
local rewrite rules (CLSLR, for short). Global rules are
applied anywhere in a given term wherever their patterns
match the portion of the system under investigation, local
rules can only be applied in the compartment in which they
are defined. Terms written in CLSLR are thus syntactically
extended to contain explicit local rules within the term,
on different compartments. Local rules can be created, moved
between different compartments and deleted. \PG{We feel that
having a calculus in which we can model the dynamic
evolution of the rules describing the system
results in a more natural and direct way to study emerging
properties of complex systems.}
As it happens in nature, where \emph{data} and \emph{programs}
are encoded in the same kind of molecular structures, we insert
rewrite rules within the terms modelling the system under investigation.

In CLSLR we also enrich CLS with a parallel semantics in which we define a reduction step lead by any number of
global and local rules that could be performed in parallel.

Since in this framework the focus is put on local rules, we define a
set of \emph{features} that can be associated to each local rule.
Features may define general properties of rewrite rules or
properties which are strictly related to the model under
investigation. We define a \emph{membrane type} for the compartments
of our model and develop a type systems enforcing the property that
a compartment must contain only local rules with specific features.

Thus, the main features of CLSLR are:
\begin{itemize}
\item different compartments can evolve according to different \emph{local} rules;
\item the set of \emph{global} rules is fixed;
\item local rules are \emph{dynamic}: they can be added, moved and erased according to both global and local rules;
\item a \emph{parallel} reduction step permits the application of several global and local rules;
\item compartments are enforced to contain only rules with specific \emph{features}.
\end{itemize}

As a running case study, emphasising the peculiarities of the
calculus, we consider some mitochondrial activity underlining the
form of symbiosis between a cell and its mitochondria
(see~\cite{DLTL06}). Mitochondria are membrane-enclosed organelle
found in eukaryotic cells that generate most of the cell's energy
supply in the form of adenosine triphosphate (ATP). A mitochondrion
is formed by two membranes, the outer and the inner membrane, having
different properties and proteins on their surfaces. Both membranes
have receptors to mediate the entrance of molecules. In
Figure~\ref{FIG_mit} we show the expression of a gene (encoded in
the DNA within the nucleus of the cell) destined to be translated
into a protein that will be catch by mitochondria and will then
catalyse the production of ATP. In particular, we will model the
following steps: (1) genes within the nucleus' DNA are transcribed
into mRNA, (2) mRNA moves from the nucleus to the cell's cytoplasm,
(3) where it is translated into the protein.

\begin{figure}[t]
\begin{center}
\begin{minipage}{0.98\textwidth}
\begin{center}
\includegraphics[width=90mm]{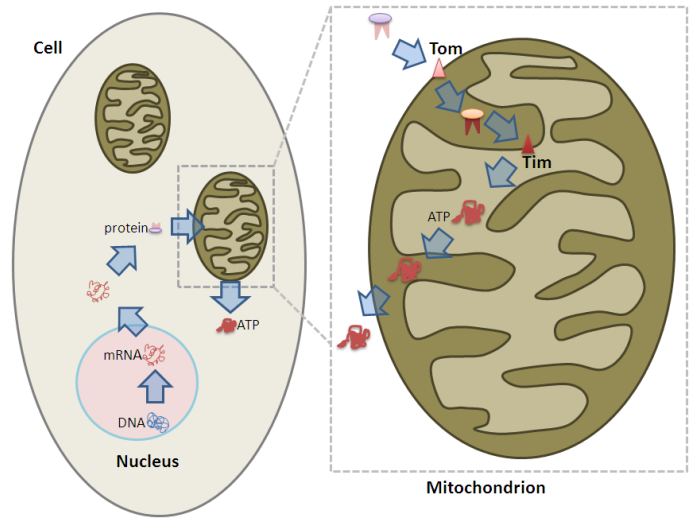}
\end{center}
\caption{\label{FIG_mit}From the cell nucleus to mitochondria.}
\end{minipage}
\end{center}
\end{figure}

The vast majority of proteins destined for the mitochondria are
encoded in the nucleus of the cell and synthesised in the cytoplasm.
These are tagged by a signal sequence, which is recognised by a
receptor protein in the Transporter Outer Membrane complex (TOM).
The signal sequence and adjacent portions of the polypeptide chain
are inserted in the intermembranous space through the TOM complex,
then in the mitochondrion internal space through a Transporter Inner
Membrane complex (TIM). According to this description, we model the
following final steps: (4) protein is detected by TOM and brought
within the intermembranous space, (5) then, through TIM, in the
mitochondrion's inside, (6) where it catalyses the production of
ATP, \AT{(7) that exits the inner, (8) and the outer mitochondrial membranes towards the cell's cytoplasm.}

\section{The calculus}\label{sect:CLSLR}
In this Section we present the Calculus of Looping Sequences with Local Rules (CLSLR).
\subsection{Syntax of CLSLR}
We assume a possibly
infinite alphabet $\EE$ of symbols ranged over by $a,b,c,\ldots$, a set of element variables
$\XX$ ranged over by $x,y,z,\ldots$, a set of sequence variables $\SV$
ranged over by $\xx,\yy,\zz,\ldots$, and a set of term variables
$\TV$ ranged over by $X,Y,Z,\ldots$. All these sets are possibly
infinite and pairwise disjoint. We denote by $\VV$ the set of all
variables, $\VV = \XX \cup \SV \cup \TV $, and with $\genVar$ a generic
variable of $\VV$. Hence, a pattern is a term that may include
variables.

 \begin{definition}\label{d:pattern}[Patterns] \emph{Patterns} $P$, \emph{sequence patterns}
$SP$ and \emph{\LB{local} rules} $R$ of {\em CLS} are given by the
following grammar:
\[
\begin{array}{lll}
P\; & \qqop{::=} &\Loop{SP} \into P \agr P \pipe P\agr \Lr \\
SP\; & \qqop{::=} &\epsilon \agr a  \agr x\agr SP \cdot SP \agr \xx \\
\Lr\; & \qqop{::=} & X  \agr SP \agr \Lr \pipe \Lr \agr R\\
R\; & \qqop{::=} &\Lr \! \mapsto \! \Lr \agr \up\Lr {SP} \! \mapsto \! \up\Lr {SP} \agr \down\Lr {SP} \! \mapsto \! \down\Lr {SP}
\end{array}
\]
where $a$ is a generic element of $\EE$, and $X,\xx$ and $x$ are
generic elements of $\TV,\SV$ and $\XX$, respectively. \AT{Sequence
patterns $SP$ defines patterns for sequences of elements,
$\Loop{SP}$ denotes a closed (looping) sequence which may contain
other patterns through the $\into$ operator, $\pipe$ is used to
denote the parallel composition (juxtaposition) of patterns, and $R$
denotes the syntax of local rules that may either exit ($\up\Lr
{SP}$) or enter ($\down\Lr {SP}$) a closed sequence $SP$. We denote
with
$\PP$ the infinite set of patterns.}\\
A \LB{local} rule {\em $\Lr_1 \! \mapsto \! \Lr_2$ is well formed} if
$\Lr_1 \not\equiv \epsilon$ and $Var(\Lr_2) \subseteq Var(\Lr_1)$,
where $Var(P)$ denotes
set of variables appearing in $P$.\\
 A \LB{local} rule {\em $\up{\Lr_1} {SP_1} \! \mapsto \!
\up{\Lr_2} {SP_2}$ or $\down{\Lr_1} {SP_1} \! \mapsto \!
\down{\Lr_2} {SP_2}$ is well formed} if $\Lr_1 \not\equiv \epsilon$,
$Var(\Lr_2) \subseteq Var(\Lr_1)$ and $Var(SP_2) \subseteq
Var(SP_1)$.
\end{definition}

Terms are patterns containing variables only inside local rules.
Sequences are closed sequence patterns. We denote with $\TT$ the
infinite set of terms, ranged over by $T$, and with $\SSq$ the infinite set of sequences, ranged over by $S$.

 An \emph{instantiation} is a partial function 
 \LB{$\sigma : (\TV \rightarrow \TT) \cup (\SV \rightarrow \SSq) \cup (\XX \rightarrow \EE)$}.
Given $P \in \PP$, with $P \sigma$ we denote the
term obtained by replacing each occurrence of each variable $\genVar
\in \VV$ appearing in $P$ with the corresponding term
$\sigma(\genVar)$, but for local rules, which are left unchanged
by instantiations, i.e., $R \sigma=R$ for all $R$ and $\sigma$. With
$\Sigma$ we denote the set of all the possible instantiations.

\begin{definition}[Structural Congruence] The structural
congruence relations $\equiv_S$ and $\equiv_R$ and $\equiv_P$ are the least
congruence relations on sequence patterns, local rules and on patterns, respectively,
satisfying the rules shown in Figure \ref{sr}.
\end{definition}
\begin{figure}
\[
\begin{array}{c}
SP_1 \cdot ( SP_2 \cdot SP_3 )
    \equiv_S ( SP_1 \cdot SP_2 ) \cdot SP_3 \qquad
SP \cdot \epsilon \equiv_S \epsilon \cdot SP
    \equiv_S SP
\\
SP_1 \equiv_S SP_2\ \mbox{ implies }\ SP_1 \equiv_P SP_2\ \mbox{ and }\
    \Loop{SP_1} \into P \equiv_P \Loop{SP_2} \into P\\
    \Lr_1 \equiv_P \Lr'_1\ \mbox{ and }\Lr_2 \equiv_P \Lr'_2\ \mbox{ imply } \Lr_1 \! \mapsto \! \Lr_2\equiv_R \Lr'_1 \! \mapsto \! \Lr'_2\\
    \Lr_1 \equiv_P {\Lr'}_1\ \mbox{ and }\Lr_2 \equiv_P {\Lr'}_2\ \mbox{ and }\ \ SP_1 \equiv_P {SP'}_1\ \mbox{ and }\ \ SP_2 \equiv_P {SP'}_2\\
    \mbox{ imply } \up{\Lr_1}{SP_1} \! \mapsto \! \up{\Lr_2}{SP_2} \equiv_R \up{{\Lr'}_1}{{SP'}_1} \! \mapsto \! \up{{\Lr'}_2}{{SP'}_2}\ \mbox{ and }\ \down{\Lr_1}{SP_1} \! \mapsto \! \down{\Lr_2}{SP_2} \equiv_R \down{{\Lr'}_1}{{SP'}_1} \! \mapsto \! \down{{\Lr'}_2}{{SP'}_2}\\
    R_1 \equiv_R R_2\ \mbox{ implies }\ R_1 \equiv_P R_2\\
P_1 \pipe P_2 \equiv_P P_2 \pipe P_1 \qquad P_1 \pipe ( P_2 \pipe
P_3 ) \equiv_P (P_1 \pipe P_2) \pipe P_3 \qquad
P \pipe \epsilon \equiv_P P \\
\Loop{\epsilon} \into \epsilon \equiv_P \epsilon \qquad \Loop{SP_1
\cdot SP_2} \into P \equiv_P \Loop{SP_2 \cdot SP_1} \into P
\end{array}
\]
\caption{Structural Congruence}\label{sr}
\end{figure}
Structural congruence rules the  state the associativity of
$\cdot$ and $\pipe$, the commutativity of the latter and the neutral
role of $\epsilon$. Moreover, axiom $\Loop{SP_1
\cdot SP_2} \into P \equiv_P \Loop{SP_2 \cdot SP_1} \into P$ says that looping sequences
can rotate. In the following, for simplicity, we will use $\equiv$
in place of $\equiv_P$.

\subsection{(Parallel) Operational Semantics}

In order to define a reduction step in which (possibly) more than
one rule is applied, following \cite{busi}, we first define the
application of a single rule, either global or local. We resort to
the standard notion of evaluation contexts.
\begin{definition}[Contexts] \label{context} \emph{Evaluation Contexts} $E $ are defined as:
\[
E  ::= \phole \agr E  \pipe T \agr T \pipe E  \agr \Loop{S} \into E
\]
where $T \in \TT$ and $S \in \SSq$. The context $\phole$ is called
the \emph{empty context}. We denote with $\CC$ the infinite set of evaluation
contexts.
\end{definition}
By definition, every evaluation context contains a single hole
$\phole$. Let us assume $E\in \CC$, with $E[T]$ we denote the term
obtained by replacing $\phole$ with $T$ in $E$. The structural
equivalence is extended to contexts in the natural way (by
considering $\phole$ as a new and unique symbol of the alphabet
$\EE$). Note that the shape of evaluation contexts does not permit
to have holes in sequences. A rewrite rule introducing a parallel
composition on the right hand side (as $a \mapsto b \pipe c$)
applied to an element of a sequence (e.g., $m \mycdot a \mycdot m$)
would result into a syntactically incorrect term (in this case $m
\cdot (b\pipe c) \cdot m$).

To enforce the fact that local and global rule applications can be
done in parallel, we underline those subterms that are produced by
the application of the rule involved. Terms matching the
left-hand-side of a (local or global) rule must not have any
underlined subterm. Underlined terms are only used for bookkeeping
in the definition of rule application.

Let $\frTT$ denote the set of terms in which some subterms can be underlined. The erasing mapping
$\defrost: \frTT \mapsto \TT$ erases all underlining obtaining a term generated by the grammar of Definition \ref{d:pattern}.

We first define the application of global or local rules to terms in $\frTT$ which produces terms in $\frTT$.

{\em Global rules} are of the shape $P_1 \mapsto P_2$. They can be
applied to terms only if they occur in a legal evaluation context.

{\em Local rules} are inside compartments, and can be applied
 only if a term matching the left-hand-side of the rule
 occurs in the same compartment.

Notice that global rules have patterns $P$ on both the left and the right-hand
side of the rules, whereas local rules have the less general $L$
patterns. In particular, $L$ patterns do not contain compartments,
and therefore cannot change the nesting structure of the
compartments of a term.

\begin{definition}[Rule Application]\label{def_Fsem}
Given a finite set of global rules $\RR$, {\em the rule application} $\ltrans{}$
is the least relation closed with respect to $\equiv$
defined by:
\[
\prooftree
\begin{array}{c}
P_1 \mapsto P_2 \in \RR \quad \sigma \in \Sigma \quad P_1\sigma
\not\equiv \epsilon
\end{array}
\justifies E [P_1\sigma]\ltrans{P_1 \mapsto P_2,E ,\sigma}
E [\freeze{P_2\sigma}] \using \lab{GRT}
\endprooftree
\]
\[
\prooftree
\begin{array}{c} \sigma \in \Sigma \quad
 \Lr_1\sigma \not\equiv \epsilon
 \end{array}
 \justifies E [\Lr_1 \mapsto \Lr_2\pipe \Lr_1\sigma\pipe T]
 \ltrans{\Lr_1 \mapsto \Lr_2,E ,\sigma,T}
 E [\Lr_1 \mapsto\Lr_2\pipe \freeze{\Lr_2\sigma}\pipe T]
 \using \lab{LR}
\endprooftree
\]
\[\prooftree
\begin{array}{c}
 \sigma \in \Sigma \quad \Lr_1\sigma \not\equiv\epsilon
\end{array}
\justifies
 E [\Loop{S_1\sigma} \into (\up{\Lr_1}{S_1} \mapsto \up{\Lr_2}{S_2} \pipe \Lr_1\sigma \pipe T)]
 \ltrans{\up{\Lr_1}{S_1} \mapsto \up{\Lr_2}{S_2},E ,\sigma,T}
 E [\freeze{\Lr_2\sigma} \pipe \Loop{\freeze{S_2\sigma}} \into (\up{\Lr_1}{S_1} \mapsto \up{\Lr_2}{S_2} \pipe T)]
 \using \using \lab{LR-Out}
\endprooftree
\]
\[\prooftree
\begin{array}{c}
 \sigma \in \Sigma \quad \Lr_1\sigma \not\equiv \epsilon
\end{array}
 \justifies
 E [\down{\Lr_1}{S_1} \mapsto \down{\Lr_2}{S_2} \pipe \Lr_1\sigma \pipe \Loop{S_1\sigma} \into T]
 \ltrans{\down{\Lr_1}{S_1} \mapsto \down{\Lr_2}{S_2},E ,\sigma,T}
 E [\down{\Lr_1}{S_1} \mapsto \down{\Lr_2}{S_2} \pipe \Loop{\freeze{S_2\sigma}} \into (T \pipe \freeze{\Lr_2\sigma})]
 \using \lab{LR-In}
\endprooftree
\]
\end{definition}

With rule $\lab{LR-Out}$ a term or a rule $\Lr_1\sigma$ exits
from the membrane in which it is contained only if the membrane has
as loop the sequence $S_1\sigma$: when outside, $\Lr_1\sigma$ is
transformed into $\Lr_2\sigma$, and the sequence $S_1\sigma$ into
$S_2\sigma$. $\lab{LR-In}$ is similar, but it moves terms or rules
into local membranes. Local rules do not permit to move, create or
delete membranes: only global rules can do that.

Observe that local rules can be dynamically added and deleted both by global and local rules.
A global rule which adds the local rule $R$ is of the shape $P \mapsto R$ and
a global rule which erases the same local rule is of the shape $R \mapsto P$.
A local rule which adds the local rule $R$ is of the shape $\Lr \mapsto R$ and
a local rule which erases the same local rule is of the shape $R \mapsto \Lr$.
Moreover local rules can add local rules in compartments separated by just one membrane, since they can be of the shapes $\up{\Lr}{S} \mapsto \up{R}{S'}$ or $\down{\Lr}{S} \mapsto \down{R}{S'}$.

A reduction step of the parallel semantics $\Ltrans{}$, starting
from a term in $\TT$ applies any number of global or local rules
that could be performed in parallel, producing a final term in $\TT$ (with no underlined subterms).

\begin{definition}[Parallel Reduction]\label{def_Psem} The reduction
$\Ltrans{}$ between terms in $\TT$ is defined by:
\[
\prooftree
\begin{array}{c}
 T=T_0 \ltrans{\labl_0}
 T_1\ltrans{\labl_1}\cdots\ltrans{\labl_n}T_{n+1}\quad n\ge 0\quad T'=\defrost(T_{n+1})
\end{array}
 \justifies T \Ltrans{} T'
\endprooftree
\]
\end{definition}

To justify the definition
of the reduction $\Ltrans{}$ we have to show that the order in which
the local or global rules are applied is not important. The notion of multi-hole context, i.e., of term where some disjoint subterms are replaced by holes is handy.
More precisely the syntax of {\em multi-hole contexts} is:
\[
C  ::= \phole \agr T \agr C  \pipe C \agr \Loop{\phole} \into C  \agr \Loop{S} \into C
\]

\MD{We can show that in a parallel reduction only disjoint subterms can change. Rules \lab{GRT} and \lab{LR} modify just one subterm. Rules \lab{LR-Out} and \lab{LR-In} modify three subterms, i.e., a membrane and a term exiting or entering the membrane, using $\epsilon$ for the missing term.}

\begin{theorem}\label{tpr} If $T \Ltrans{} T'$, then there is a multi-hole context $C[\;]\ldots[\;]$ and terms $T_1,\ldots,T_n$,  $T'_1,\ldots,T'_n$ such that $T\equiv C[T_1]\ldots[T_n]$, $T'\equiv C[T'_1]\ldots[T'_n]$ and for all $1\leq i\leq n$:
\begin{itemize}\item either $C[T^*_1]\ldots[T_i]\ldots[T^*_n]\ltrans{}C[T^*_1]\ldots[T'_i]\ldots[T^*_n]$
\item or there are $i_1,i_2,i_3$ such that $i\in\set{i_1,i_2,i_3}$ and\\
$C[T^*_1]\ldots[T_{i_1}]\pipe\Loop{[T_{i_2}]}\into[T_{i_3}]\ldots[T^*_n]\ltrans{}C[T^*_1]\ldots[T'_{i_1}]\pipe\Loop{[T'_{i_2}]}\into[T'_{i_3}]\ldots[T^*_n]$
\end{itemize}
where $T^*_j$ can be either  $T_j$ (subterm of $T$) or $T'_j$
(subterm of $T'$).
\end{theorem}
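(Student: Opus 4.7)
The plan is to proceed by induction on the length $n+1$ of the single-step chain $T = T_0 \ltrans{\labl_0} T_1 \ltrans{\labl_1} \cdots \ltrans{\labl_n} T_{n+1}$ whose erasure $\defrost(T_{n+1})$ equals $T'$. The whole argument rests on an invariant about the role of $\freeze{\cdot}$: each of the rules \lab{GRT}, \lab{LR}, \lab{LR-Out} and \lab{LR-In} in Definition \ref{def_Fsem} underlines precisely the subterms produced in that step, whereas every left-hand side in Definition \ref{d:pattern} is a pattern that contains no underlined material, and no instantiation $\sigma$ can map a variable to an underlined term. Consequently a redex matched in a later step $\ltrans{\labl_j}$ cannot overlap an underlined fragment produced by an earlier step $\ltrans{\labl_i}$ with $i < j$, so the positions modified along the chain are pairwise disjoint, up to the three-position bundling intrinsic to the membrane-crossing rules.

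For the base case $n = 0$, the desired multi-hole context is read off directly from the shape of the single rule application. For \lab{GRT} or \lab{LR} the evaluation context $E$ of Definition \ref{def_Fsem} yields a one-hole context $C[\;]$ with $T_1$ the matched redex and $T'_1$ the erasure of the produced $\freeze{\cdot}$ subterm, which matches the first clause of the theorem. For \lab{LR-Out} or \lab{LR-In} we obtain the three-hole shape $C[\;]\pipe\Loop{[\;]}\into[\;]$ mandated by the rule, whose holes are occupied in $T$ by the moving subterm, the loop sequence and the remaining compartment contents, and in $T'$ by the corresponding erased replacements, matching the second clause with $\{i_1,i_2,i_3\} = \{1,2,3\}$.

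For the inductive step I would apply the inductive hypothesis to the shorter chain starting at $T_1$, obtaining a multi-hole context $C'$ decomposing $\defrost(T_1)$ and $T'$. The disjointness invariant ensures that the position(s) touched by the first step $\labl_0$ in $T_0$ are either entirely disjoint from every hole of $C'$ or coincide with a single hole, so they can be merged with $C'$ to yield the desired context $C$ for $T_0$ and $T'$; the required mixed-configuration reductions $C[T^*_1]\ldots[T_i]\ldots[T^*_n] \ltrans{} C[T^*_1]\ldots[T'_i]\ldots[T^*_n]$ then follow from closure of $\ltrans{}$ under evaluation contexts together with the observation that each rule application depends only on the contents of its own hole(s) and the immediately surrounding context, not on the other (disjoint) holes. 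The main obstacle will be the bookkeeping for \lab{LR-Out} and \lab{LR-In}, which produce three underlined fragments in one step and whose re-firability in a mixed configuration requires substituting all three of $T^*_{i_1},T^*_{i_2},T^*_{i_3}$ simultaneously by their pre-step values — precisely the content of the second clause of the theorem.
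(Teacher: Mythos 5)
Your proposal is correct and follows essentially the same route as the paper: an induction along the chain of single-step applications, with the key invariant that every freshly produced subterm is underlined (or $\epsilon$) and therefore cannot overlap any later redex, followed by a case analysis on the applied rule that reads off one hole for \lab{GRT}/\lab{LR} and the three-hole configuration $[\;]\pipe\Loop{[\;]}\into[\;]$ for \lab{LR-Out}/\lab{LR-In}. The only cosmetic difference is the direction of the induction (the paper inducts forward on the prefix index $h$ relative to a single global decomposition, whereas you induct on the tail of the chain, which would require stating the invariant for terms already containing underlined subterms), but the substance of the argument is identical.
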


\begin{proof}
If $T \Ltrans{} T'$, then for some $U_0,\ldots,U_{m+1}$ we get $T=U_0 \ltrans{}\cdots\ltrans{}U_{m+1}$ where $m\ge 0$ and  $T'=\defrost(U_{m+1}).$
We show by induction on $h\leq m$ and by cases on the last applied reduction rule that {\em
\begin{itemize}\item either $U_h=C[T^*_1]\ldots[T_i]\ldots[T^*_n]$ and $U_{h+1}=C[T^*_1]\ldots[T'_i]\ldots[T^*_n]$
\item or there are $i_1,i_2,i_3$ such that $i\in\set{i_1,i_2,i_3}$ and\\
$U_h=C[T^*_1]\ldots[T_{i_1}]\pipe\Loop{[T_{i_2}]}\into[T_{i_3}]\ldots[T^*_n]$ and $U_{h+1}=C[T^*_1]\ldots[T'_{i_1}]\pipe\Loop{[T'_{i_2}]}\into[T'_{i_3}]\ldots[T^*_n]$
\end{itemize}
where `` $^*$'' can be either ``$\;$'' or `` $'$'' and all terms with $'$ are either underlined or $\epsilon$.}

If the last applied rule is
\[
\prooftree
\begin{array}{c} \sigma \in \Sigma \quad
 \Lr_1\sigma \not\equiv \epsilon
 \end{array}
 \justifies E [\Lr_1 \mapsto \Lr_2\pipe \Lr_1\sigma\pipe V]
 \ltrans{\Lr_1 \mapsto \Lr_2,E ,\sigma,T}
 E [\Lr_1 \mapsto\Lr_2\pipe \freeze{\Lr_2\sigma}\pipe V]
 \using
\endprooftree
\]
then $U_h=E [\Lr_1 \mapsto \Lr_2\pipe \Lr_1\sigma\pipe V]$ and $U_{h+1}=E [\Lr_1 \mapsto\Lr_2\pipe \freeze{\Lr_2\sigma}\pipe V]$. By induction $U_h=C[T^*_1]\ldots[T^*_n]$. Since
$\Lr_1\sigma$ is a subterm of $T$ and $\freeze{\Lr_2\sigma}$ is a subterm of $U_{m+1}$ there must be an index $i$ such that $T_i=\Lr_1\sigma$ and $T'_i=\freeze{\Lr_2\sigma}$.

If the last applied rule is
\[
\prooftree
\begin{array}{c}
 \sigma \in \Sigma \quad \Lr\sigma \not\equiv \epsilon \quad
 \Lr_1\sigma\in\TT\quad S_1\sigma\in\TT
\end{array}
 \justifies
 E [\down{\Lr_1}{S_1} \mapsto \down{\Lr_2}{S_2} \pipe \Lr_1\sigma \pipe \Loop{S_1\sigma} \into V]
 \ltrans{\down{\Lr_1}{S_1} \mapsto \down{\Lr_2}{S_2},E ,\sigma,T}
 E [\down{\Lr_1}{S_1} \mapsto \down{\Lr_2}{S_2} \pipe \Loop{\freeze{S_2\sigma}} \into (V \pipe \freeze{\Lr_2\sigma})]
 \using
\endprooftree
\]
then $U_h=E [\down{\Lr_1}{S_1} \mapsto \down{\Lr_2}{S_2} \pipe \Lr_1\sigma \pipe \Loop{S_1\sigma} \into V]$ and $U_{h+1}=E [\down{\Lr_1}{S_1} \mapsto \down{\Lr_2}{S_2} \pipe \Loop{\freeze{S_2\sigma}} \into (V \pipe \freeze{\Lr_2\sigma})]$. By induction $U_h=C[T^*_1]\ldots[T^*_n]$. Notice that
$\Lr_1\sigma, S_1\sigma$ are subterms of $T$ and $\freeze{\Lr_2\sigma}, \freeze{S_2\sigma}$ are subterm of $U_{m+1}$. Moreover $\Lr_1\sigma$, $S_1\sigma$ and $\epsilon$ in $T$ are replaced by  $\epsilon$, $\freeze{S_2\sigma}$ and $\freeze{\Lr_2\sigma}$ in $U_{m+1}$, respectively.
Therefore there must be indexes $i_1,i_2,i_3$ such that $T_{i_1}=\Lr_1\sigma$, $T'_{i_1}=\epsilon$, $T_{i_2}=S_1\sigma$, $T'_{i_2}=\freeze{S_2\sigma}$, $T_{i_3}=\epsilon$, $T'_{i_3}=\freeze{\Lr_2\sigma}$.

\end{proof}

\begin{example}\label{EX_SYNT_MIT}
[Mitochondria Running Example: Syntax and Reductions] A CLSLR term
representing the mitochondria evolution inside the cell's activity
discussed in the introduction could be:
\[
\begin{array}{ll}
\texttt{CELL} = \Loop{cell}\into( \;      & \texttt{NUCLEUS}\pipe \texttt{MITOCH} \pipe \ldots \pipe \texttt{MITOCH} \pipe\\
                                        & mRNA\! \mapsto \! protein  \pipe  \\
                                        & \down {protein} {Tom} \! \mapsto \! \down {protein} {Tom} \quad)
\end{array}
\]

\AT{A cell is composed by its membrane (here just represented by the element $cell$) and its content (in this case, the nucleus, a certain number of mitochondria and a few rules modelling the activity of interest). In particular, the two rules above model the steps (3) and (4), respectively, of the example schematised in the introduction.}

\AT{Assuming that \DNA\ is the sequence of genes representing
the cell's DNA, and \GENE\ is the particular gene (contained in \DNA ) codifying the protein, we define the nucleus of the cell with the CLSLR term:}
\[\begin{array}{ll}
\texttt{NUCLEUS}=\Loop{nucleus}\into( \; & \DNA \pipe \\
                                         & \xx\cdot\GENE\cdot\yy \! \mapsto \! (\xx\cdot\GENE\cdot\yy\pipe mRNA)\pipe \\
                                         & \up{mRNA}{nucleus}\! \mapsto \! \up{mRNA}{nucleus} \quad)
\end{array}\]
\AT{Note that the first of the two rules above models step (1) of our example (DNA transcription into mRNA), the second one (mRNA exits the nucleus) models step (2).}

\AT{The mitochondria of our model are composed of a membrane, on which we point out the $Tom$ complex, containing an inner membrane ($\texttt{INN\_MITOCH}$) and a couple of rules:}
\[\begin{array}{ll}
\texttt{MITOCH}= \Loop{Tom}\into( \; & \texttt{INN\_MITOCH}\pipe \\
    & \down {protein} {Tim} \! \mapsto \! \down {(Mit_A\! \mapsto \! (Mit_A\pipe \textit{ATP}))} {Tim}\pipe \\
    & \up{\textit{ATP}}{\xx}\! \mapsto \! \up{\textit{ATP}}{\xx} \quad)\\
\end{array}\]
\AT{where we denote with the element $Mit_A$ a mitochondrial factor
inside the inner membrane (activated by our protein), necessary to
produce ATP. In particular, the protein, when in the intermembranous
space, is moved through $Tim$ inside the inner mitochondrial space
(step (5) of our example) and then transformed into the newly
generated rule $Mit_A\! \mapsto \! (Mit_A\pipe \textit{ATP})$ which
will lead the production of \textit{ATP} (step (6) of our example).}

\AT{Finally, in \texttt{INN\_MITOCH} we point out the $Tim$
complex:}
\[\begin{array}{ll}
\texttt{INN\_MITOCH}=\Loop{Tim}\into( \; & Mit_A \pipe \\
                                        & \up{\textit{ATP}}{\xx}\! \mapsto \! \up{\textit{ATP}}{\xx} \quad)\\
\end{array}\]
\AT{Both in \texttt{MITOCH} and \texttt{INN\_MITOCH} we have the rules needed to transport the \textit{ATP} towards the cell's cytoplasm (steps (7) and (8) of the example).}

A possible \AT{(parallel)} reduction of this term, when \GENE\ initially produces
a certain number of $mRNA$  is (by focusing only on the more interesting
changes) shown in Figure \ref{red}.
\begin{figure}
\[\begin{array}{lll}
\ldots&\Ltrans{}^+&\Loop{cell}\into(\Loop{nucleus}\into(mRNA\pipe\ldots \pipe mRNA\pipe\ldots)\pipe\ldots)\\
&\Ltrans{}^+&\Loop{cell}\into(mRNA\pipe\ldots \pipe mRNA\pipe\ldots)\\
&\Ltrans{}&\Loop{cell}\into(protein\pipe\ldots \pipe protein\pipe\ldots)\\
&\Ltrans{}&\Loop{cell}\into(\Loop{Tom}\into(protein\pipe\ldots)\pipe\ldots\pipe\Loop{Tom}\into(protein\pipe\ldots)\pipe\ldots)\\
&\Ltrans{}&\Loop{cell}\into(\Loop{Tom}\into(\Loop{Tim}\into(Mit_A\! \mapsto \! (Mit_A\pipe \textit{ATP})\pipe\ldots)\pipe\ldots)\pipe\\
&&\qquad\qquad\Loop{Tom}\into(\Loop{Tim}\into(Mit_A\! \mapsto \! (Mit_A\pipe \textit{ATP})\pipe\ldots)\pipe\ldots)\pipe\ldots)\\
&\Ltrans{}&\Loop{cell}\into(\Loop{Tom}\into(\Loop{Tim}\into(\textit{ATP}\pipe\ldots)\pipe\ldots)\pipe\\
&&\qquad\qquad\Loop{Tom}\into(\Loop{Tim}\into(\textit{ATP}\pipe\ldots)\pipe\ldots)\pipe\ldots)\\
&\Ltrans{}&\Loop{cell}\into(\Loop{Tom}\into(\textit{ATP}\pipe\Loop{Tim}\into(\textit{ATP}\pipe\ldots)\pipe\ldots)\pipe\\
&&\qquad\qquad\Loop{Tom}\into(\textit{ATP}\pipe\Loop{Tim}\into(\textit{ATP}\pipe\ldots)\pipe\ldots)\pipe\ldots)\\
&\Ltrans{}&\Loop{cell}\into(\textit{ATP}\pipe\ldots \pipe \textit{ATP}\pipe\Loop{Tom}\into(\textit{ATP}\pipe\Loop{Tim}\into(\textit{ATP}\pipe\ldots)\pipe\ldots)\pipe\\
&&\qquad\qquad\Loop{Tom}\into(\textit{ATP}\pipe\Loop{Tim}\into(\textit{ATP}\pipe\ldots)\pipe\ldots)\pipe\ldots)\\
\end{array}\]
\caption{Mitochondria evolution}\label{red}
\end{figure}
The $\textit{ATP}$ produced in the last but two reductions  in
$\texttt{INN\_MITOCH}$ moves to $\texttt{MITOCH}$ in the last but one reduction
and new $\textit{ATP}$ is produced in $\texttt{INN\_MITOCH}$. In the last
reduction, the firstly generated $\textit{ATP}$ moves to the cell,
the secondly generated $\textit{ATP}$ moves to $\texttt{MITOCH}$ and
new $\textit{ATP}$ is produced in $\texttt{INN\_MITOCH}$.

\end{example}

\section{Types}

In this section we introduce a type system that enforces the fact
that compartments must contain rules having specific features. E.g.,
in \cite{OP11} the following {\em rule features} for $\Lr_1 \mapsto
\Lr_2$ are suggested:
\begin{itemize}
\item the rule is \emph{deleting} if $Vars(\Lr_1)\supset Vars( \Lr_2)$ (denoted by \df);
\item the rule is \emph{replicating} if some variable in $ \Lr_2$ occurs twice (denoted by \rf);
\item the rule is \emph{splitting} if $ \Lr_1$ has a subterm containing two different variables (denoted by \s);
\item the rule is \emph{equating} if some variable in $ \Lr_1$ occurs twice (denoted by
\ef).
\end{itemize}
This kind of features reflects a structure of rewrite features which
could be common for rewrite systems in general. Other,
model-dependent, features could be defined to reflect peculiarities
and properties of the particular model under investigation. The
features of the rules allowed in a compartment are controlled by the
wrapping sequence of the compartment. Our {\em typing system} and
the consequent {\em typed parallel reduction} ensure that, in spite
of the facts that reducing a term may move rules in and out of
compartments, compartments always contain rules permitted by their
wrapping sequence. In addition to the previous features of rules we
say that:
\begin{itemize}
\item the feature of rule $\up{\Lr_1}{S_1} \mapsto \up{\Lr_2}{S_2}$
is that it is an \emph{out rule} (denoted by \of);
\item the feature of rule $\down{\Lr_1}{S_1} \mapsto \down{\Lr_2}{S_2}$
is that it is an \emph{in rule} (denoted by \fif).
\end{itemize}
To express the control of the wrapping sequence over the content of
the compartment, we associate a subset
$\varphi$ of $\set{\df,\rf,\s,\ef,\of,\fif}$ to every element in $\EE$. This is called a {\em
membrane type}. We use $\La$ to denote a classification of elements.
The type assignment in Figure \ref{fig_TyMR}, where a basis $\T$
assigns membrane types to element and sequence variables, defines
the type of a sequence as the union of the membrane types of its
elements.

\begin{figure}[h!]
\begin{center}
\framebox{ $
\begin{array}{c}
\begin{array}{c@{\quad\quad\quad}c@{\quad\quad\quad}c}
\TypeSeq\T\epsilon{\emptyset}\quad\lab{TSeps} & \TypeSeq{
\T,\GenericVariable:\varphi}{\GenericVariable}{\varphi}
\quad\lab{TSvar}&
 \prooftree
a:\varphi\in \La \justifies \TypeSeq{ \T}a \varphi \using
\lab{TSelm}
\endprooftree
\end{array}
\\ \\
\prooftree \TypeSeq{ \T}{SP}{\varphi}\quad
\TypeSeq{\T}{SP'}{\varphi'} \justifies \TypeSeq{\T}{SP\cdot
SP'}{\varphi \LB{\cup} \varphi'} \using \lab{TSseq}
\endprooftree
\end{array}
$ }
\end{center}
\caption{Typing Rules for Membranes} \label{fig_TyMR}
\end{figure}
To define the type of patterns, that may contain parallel
(composition) of rules, we consider:
\begin{enumerate}
\item  the features of the rules, contained in the pattern, and
\item in case there are
output rules the type of the rules that are emitted by these output
rules.
\end{enumerate}
Therefore a {\em \pat\ type}, denoted by $\tau$, is a sequence of
membrane types, i.e., $\tau\in\set{\varphi}^{\ast}$. With
$\emptyseq$ we denote the empty sequence. If the parallel
composition of local rules $R_1\pipe\cdots\pipe R_n$, $n\geq 0$, has type
$\conc{\varphi}{\tau}$, then $\varphi$ is the union of the features
of the rules $R_i$ ($1\leq i\leq n$), and $\tau$ is the type of the
parallel composition of rules in $\Lr'$ for those $\Lr'$ such that
$R_i=\upRule{\Lr}{SP}{\Lr'}{SP'}$ for some $i$, $1\leq i\leq n$ (the
type of the parallel composition of the rules that are emitted). If
no rule is emitted, then $\tau=\emptyseq$.

{\em Union} of \pat\ types, $\Rtunion$, is defined inductively by:
\begin{itemize}
  \item $\emptyseq\Rtunion\tau=\tau\Rtunion\emptyseq=\tau$, and
   \item
   $\conc{\varphi_1}{\tau_1}\Rtunion\conc{\varphi_2}{\tau_2}=\conc{(\varphi_1\cup\varphi_2)}{(\tau_1\Rtunion\tau_2)}$.
  \end{itemize}
and {\em containment}, $\Rtsub$, is defined by:
\begin{itemize}
 \item $\emptyseq\Rtsub\tau$
 \item $\conc{\varphi}{\tau}\Rtsub\conc{\varphi'}{\tau'}$ if
  $\varphi\subseteq\varphi'$ and $\tau\Rtsub\tau'$.
 \end{itemize}
The judgment $\TypePat{ \T}{P}{\tau}$, defined in Figure
\ref{fig_TyR}, asserts that the pattern $P$ is {\em well formed} and
has \pat\ type $\tau$, assuming the basis $\T$, which assigns membrane
types to element  and sequence variables and pattern types to term
variables. The judgement $\TypeGR{ \T} {P_1 \mapsto P_2} {\Phi_2}$ in last rule defines well-formedness of
global rules.
\begin{figure}[h!]
\begin{center}
\framebox{
$
\begin{array}{c}
{ \TypePat{ \T
\LB{,X:\tPat}}{X}{\LB{\tPat}}\quad\lab{Tvar}}\quad\quad\quad {
\TypePat{ \T}{SP}{\emptyseq}\quad\lab{Tseq} }
\\ \\
\prooftree\TypePat\T{L_2}{\tau}
\justifies \TypePat{ \T} {\Lr_1 \mapsto \Lr_2} {{\feature(\Lr_1
\mapsto \Lr_2)}\Rtunion\tau} \using \lab{TRloc}
\endprooftree
\\ \\
\prooftree
 \begin{array}{l}
\TypePat\T{L_2}{\tau}\quad\TypeSeq\T{S_1}{\varphi_1}\quad\TypeSeq\T{S_2}{\varphi_2}\quad\varphi_1\LB{\subseteq}\varphi_2
 \end{array}
\justifies \TypePat{ \T} {\up{\Lr_1}{S_1} \mapsto \up{\Lr_2}{S_2}} {\conc {\set\of}{\tau}}
\using \lab{TRlocOut}
\endprooftree \\ \\
 \prooftree
\TypePat\T{L_2}{{\varphi}\LB{\conc{\
}{\tau}}}\quad\TypeSeq\T{S_1}{\varphi_1}\quad\TypeSeq\T{S_2}{\varphi_2}\quad\varphi\LB{\cup}\varphi_1\LB{\subseteq}\varphi_2\justifies
\TypePat{ \T} {\down{\Lr_1}{S_1} \mapsto \down{\Lr_2}{S_2}}
{{\set\fif\LB{\Rtunion \tau}}} \using \lab{TRlocIn}
\endprooftree \\ \\
\prooftree \TypePat{ \T}{P}{\tau}\quad \TypePat{\T}{P'}{\tau'}
\justifies \TypePat{\T}{P\pipe P'}{\tau \Rtunion \tau'}
\using \lab{Tpar}
\endprooftree \\ \\
\prooftree
\begin{array}{l}
\TypeSeq{ \T}{SP}{\varphi}\quad
\TypePat{\T}{P}{\conc{\varphi'}{\tau'}}\quad \varphi' \LB{\subseteq} \varphi
\end{array}
\justifies \TypePat{ \T}{\Loop{SP} \into P}{\tau'}
\using \lab{Tcomp}
\endprooftree \\ \\
\prooftree
\TypePat{\T } {P_1}{\tau_1}\quad\TypePat\T{P_2}{\tau_2}\quad\tau_2\Rtsub\tau_1
\justifies \TypeGR{ \T} {P_1 \mapsto P_2} {\Phi_2}
\using \lab{TRglob}
\endprooftree
\end{array}
$
}
\end{center}
\caption{Typing Rules for Patterns and Global Rules}
\label{fig_TyR}
\end{figure}
It is easy to verify that the typing rules in Figures \ref{fig_TyMR}
and \ref{fig_TyR} enjoy weakening, i.e., if $\T \subseteq \T'$ then
$\TypeSeq{\T}{SP}{\tSeq}$ implies $\TypeSeq{\T'}{SP}{\tSeq}$,
$\TypePat{\T}{P}{\tPat}$ implies $\TypePat{\T'}{P}{\tPat}$, and $\TypeGR{ \T} {P_1 \mapsto P_2} {\Phi_2}$ implies $\TypeGR{ \T'} {P_1 \mapsto P_2} {\Phi_2}$.

Rule \lab{Tvar} asserts that a term variable is well typed when its
\pat\ type is found in the basis. Rule \lab{Tseq} asserts that,
since a sequence does not contain rules, its \pat\ type is empty.
Rule \lab{TRloc} asserts that the type of a local rule $R=\Lr_1
\mapsto \Lr_2$ is the union of the set of features of the rule $R$,
denoted by $\feature(R)$, and the  \pat\ type of its right-hand-side
$\Lr_2$. This is because once the rule is applied an instance of the
pattern $\Lr_2$ will substitute the instance of its left-hand-side
$\Lr_1$. Rule \lab{TRlocOut} checks that the features of rules
permitted by the membrane $S_2$ include the one permitted by $S_1$,
so that if the compartment was well formed before applying
$\up{\Lr_1}{S_1} \mapsto \up{\Lr_2}{S_2}$, it will be well formed
afterwards (when $S_2$ replaces $S_1$). Moreover, the \pat\ type of
the rule is $\set\of$, concatenated with the \pat\ type of $\Lr_2$,
since $\Lr_2$ is the pattern sent outside the compartment. Rule
\lab{TRlocIn} checks rule $\down{\Lr_1}{S_1} \mapsto
\down{\Lr_2}{S_2}$. Since the pattern $\Lr_2$ will get into a
compartment with membrane $S_1$, the membrane $S_2$, that replaces
$S_1$, must permit all the features of rules that were permitted
before, and moreover, it permits the features of the rules in
$\Lr_2$. The type is $\set\fif$ union the type of the patterns that
are emitted by the out rules contained in $\Lr_2$. Rule \lab{Tpar}
enforces the fact that the patterns in parallel are both well formed
and the final \pat\ type is the union of the two \pat\ types. Rule
\lab{Tcomp} checks that a compartment contain only rules whose
features are permitted by its wrapping sequence. The \pat\ type of
the compartment is the type of the rules that are emitted. Finally,
rule \lab{TRGlob} says that
the global rule $P_1 \mapsto P_2$ is well formed in case the pattern $P_2$ that will
replace $P_1$ has less features, so that it is permitted by all the
compartments in which $P_1$ is permitted.

As we can see from rule \lab{TRLoc} the type system is independent
from the specific set of features considered. Any syntactic
characterisation of rules could be considered for a feature.

Based on the previous typing system we define a {\em typed
semantics}, that preserves well-formedness of terms. Let an
instantiation $\sigma$ {\em agree} with a basis $\T$ (notation
$\sigma\in\Sigma_\T$) if $x:\tSeq\in \T$ implies
$\TypeSeq{}{\sigma(x)}{\tSeq}$, $\xx:\tSeq\in \T$ implies
$\TypeSeq{}{\sigma(\xx)}{\tSeq}$, and $X:\tPat\in \T$ implies
$\TypePat{}{\sigma(X)}{\tPat}$. This is sound since the judgments
$\vdash_s$ use assumptions on element and sequence variables, while
the the judgments $\vdash_p$ use assumptions on term variables.

\begin{definition}[Typed Rule Application]\label{def_Tsem}
Given a finite set of global rules $\RR$, the {\em typed rule application} $\ttrans{}$ is the least relation closed with respect to $\equiv$
and satisfying the following rules:
\[
\prooftree
\begin{array}{c}
\RR_\T = \{P_1 \mapsto P_2 \in \RR \pipe \TypeGR\T{P_1 \mapsto P_2}{}\}\\
P_1 \mapsto P_2 \in \RR_\T \quad \sigma \in \Sigma_\T \quad
P_1\sigma \not\equiv \epsilon
\end{array}
\justifies E[P_1\sigma] \ttrans{} E[\freeze{P_2\sigma}]
\using \lab{T-GRT}
\endprooftree
\]
\[\prooftree
\begin{array}{c} \sigma \in \Sigma_\T \quad \Lr_1\sigma \not\equiv
\epsilon   \end{array} \justifies E[\Lr_1 \mapsto \Lr_2\pipe
\Lr_1\sigma] \ttrans{} E[\Lr_1 \mapsto \Lr_2\pipe \freeze{\Lr_2\sigma}]
\using \lab{T-LR}
\endprooftree
\]
\[\prooftree
\begin{array}{c} \sigma \in \Sigma_\T \quad \Lr_1\sigma \not\equiv
\epsilon \quad \end{array} 
\justifies E[\Loop{S_1\sigma} \into (\up{\Lr_1}{S_1} \mapsto
\up{\Lr_2}{S_2} \pipe \Lr_1\sigma \pipe T)] \ttrans{} E[\freeze{\Lr_2\sigma}
\pipe \Loop{\freeze{S_2\sigma}} \into (\up{\Lr_1}{S_1} \mapsto
\up{\Lr_2}{S_2} \pipe T)] \using \lab{T-LR-Out}
\endprooftree
\]
\[\prooftree
\begin{array}{c} \sigma \in \Sigma_\T \quad \Lr_1\sigma \not\equiv
\epsilon  \end{array} 
\justifies E[\down{\Lr_1}{S_1} \mapsto \down{\Lr_2}{S_2} \pipe
\Lr_1\sigma \pipe \Loop{S_1\sigma} \into T] \ttrans{}
E[\down{\Lr_1}{S_1} \mapsto \down{\Lr_2}{S_2} \pipe \Loop{\freeze{S_2\sigma}}
\into (T \pipe \freeze{\Lr_2\sigma})] \using \lab{T-LR-In}
\endprooftree
\]
\end{definition}
\begin{definition}[Typed Parallel Reduction]\label{def_PsemT} The reduction
$\Ttrans{}$ between term in $\TT$ is defined by:
\[
\prooftree
\begin{array}{c}
 T=T_0 \ttrans{\labl_0}
 T_1\ttrans{\labl_1}\cdots\ttrans{\labl_n}T_{n+1}\quad n\ge 0\quad T'=\defrost(T_{n+1})
\end{array}
 \justifies T \Ttrans{} T'
\endprooftree
\]
\end{definition}

The property enforced by the type system is that well-typed terms
reduce to well-typed terms: the proof is the content of the Appendix.

\smallskip

\begin{theorem}[Subject Reduction]\label{theoPreserveCorrect}
If $\TypePat{}{T}\tPat$ and $T \Ttrans{} T'$, then
$\TypePat{}{T'}{\tPat'}$ for some $\tPat'\Rtsub\tPat$.
\end{theorem}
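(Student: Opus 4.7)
The plan is to induct on the length $n+1$ of the single-step chain defining $T \Ttrans{} T'$. Since the underline/defrost mechanism is not reflected in the typing rules of Figures~\ref{fig_TyMR}--\ref{fig_TyR} (underlined subterms type exactly like their plain counterparts, and $\defrost$ only erases these annotations), the theorem reduces to a preservation lemma for a single $\ttrans{}$-step, taken up to structural congruence.

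Before the case analysis I would establish three preliminaries. First, that structural congruence preserves both sequence and pattern types, by a routine induction on the axioms in Figure~\ref{sr}. Second, a substitution lemma: if $\TypePat{\T}{P}{\tau}$ and $\sigma\in\Sigma_\T$, then $\TypePat{}{P\sigma}{\tau}$. The non-trivial point is that $R\sigma=R$ for every rule $R$, so the subderivations for rule-subterms are untouched; for element, sequence and term variables, the conditions defining $\sigma\in\Sigma_\T$ give exactly the types demanded by \lab{TSvar} and \lab{Tvar}. Third, a context replacement lemma: if $\TypePat{\T}{E[T_1]}{\tau}$, $\TypePat{\T}{T_1}{\tau_1}$, $\TypePat{\T}{T_2}{\tau_2}$ and $\tau_2\Rtsub\tau_1$, then $\TypePat{\T}{E[T_2]}{\tau'}$ for some $\tau'\Rtsub\tau$. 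This is proved by induction on $E$, using monotonicity of $\Rtunion$ under $\Rtsub$ (so that \lab{Tpar} propagates the shrinkage), and noting that the side condition $\varphi'\subseteq\varphi$ of \lab{Tcomp} can only become easier to satisfy when the inside type shrinks.

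Given these lemmas, the case analysis on the $\ttrans{}$-rule from Definition~\ref{def_Tsem} is almost mechanical. For \lab{T-GRT}, premise \lab{TRglob} supplies $\tau_2\Rtsub\tau_1$ for $P_1\mapsto P_2$; the substitution lemma lifts this to $P_i\sigma$, and context replacement concludes. For \lab{T-LR}, the local rule $\Lr_1\mapsto\Lr_2$ remains present with type $\feature(R)\Rtunion\tau$ where $\TypePat{\T}{\Lr_2}{\tau}$; since $\tau\Rtsub\feature(R)\Rtunion\tau$ by definition of $\Rtunion$, replacing $\Lr_1\sigma$ (of whatever type) by $\freeze{\Lr_2\sigma}$ (of type $\tau$) does not enlarge the type of the enclosing parallel composition, and context replacement again suffices.

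The main obstacle is the two cross-membrane rules \lab{T-LR-Out} and \lab{T-LR-In}, where the wrapping sequence changes and part of the content migrates between compartments. For \lab{T-LR-Out} one must verify that \lab{Tcomp} still applies to $\Loop{\freeze{S_2\sigma}}\into(\cdots)$: the new inside has lost $\Lr_1\sigma$, the features required by what remains are still covered by $\varphi_2$ because $\varphi_1\subseteq\varphi_2$, while $\freeze{\Lr_2\sigma}$ appearing outside contributes exactly the type $\tau$ that was already accounted for as the tail of the rule's own type $\set{\of}\cdot\tau$. The symmetric rule \lab{T-LR-In} needs the stronger side condition $\varphi\cup\varphi_1\subseteq\varphi_2$ of \lab{TRlocIn} precisely because the arriving $\freeze{\Lr_2\sigma}$ brings its own head features $\varphi$ into the new compartment. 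Carefully tracking these membrane-type inclusions through the context-replacement induction is the delicate step; once done, monotonicity of $\Rtsub$ under $\Rtunion$ delivers the required $\tPat'\Rtsub\tPat$.
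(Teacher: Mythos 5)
Your proposal matches the paper's proof in all essentials: the paper likewise reduces the theorem to single-step preservation for $\ttrans{}$ and establishes it with the same three ingredients --- a substitution lemma (Lemmas~\ref{LemmaS} and~\ref{LemmaT}), a context-replacement lemma (Lemma~\ref{lemmaHoleTyped}) stated exactly as your third preliminary, and a case analysis on the four reduction rules via inversion, with the membrane-type inclusions $\varphi_1\subseteq\varphi_2$ and $\varphi\cup\varphi_1\subseteq\varphi_2$ carrying the \lab{LR-Out} and \lab{LR-In} cases just as you describe. The only differences are presentational: you make explicit the (routine) preservation of types under structural congruence and the irrelevance of the underlining, which the paper leaves tacit, while the paper states the substitution lemma as an equivalence because its forward direction is also needed to pass from a typing of $P\sigma$ in the empty basis back to a typing of $P$ under $\T$ before applying inversion.
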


\smallskip

\begin{example}[Mitochondria Running Example: Typing]
Let the rules used in Example~\ref{EX_SYNT_MIT} be labelled by:
\begin{itemize}
  \item $\Rg=\xx\cdot\GENE\cdot\yy \! \mapsto \! (\xx\cdot\GENE\cdot\yy\pipe mRNA)$,
  \item $\RmR=mRNA\! \mapsto \! protein$,
  \item $\Rpdo=\down {protein} {Tom} \! \mapsto \! \down {protein} {Tom}$,
  \item $\Rmu=\up{mRNA}{nucleus}\! \mapsto \! \up{mRNA}{nucleus}$,
  \item $\Rpdi=\down {protein} {Tim} \! \mapsto \! \down {\Rmic} {Tim}$,
  \item $\Rmic=Mit_A\! \mapsto \! (Mit_A\pipe \textit{ATP})$
  \item $\Rau=\up{\textit{ATP}}{\xx}\! \mapsto \!
  \up{\textit{ATP}}{\xx}$.
 \end{itemize}
 Let $\varphi_g=\feature(\Rg)$, $\varphi_m=\feature(\RmR)$,
$\varphi_a=\feature(\Rmic)$.
The term representing our model can be typed if $\La$ contains appropriate membrane types
for the elements which occur in the membranes, i.e.:
\[\set{cell:\varphi_{cell}, nucleus:\varphi_{nucleus},
Tom:\varphi_{Tom}, Tim:\varphi_{Tim}}\subseteq\La\] where
$\set{\fif}\cup\varphi_m\subseteq\varphi_{cell}$,
$\set{\of}\cup\varphi_g\subseteq\varphi_{nucleus}$,
$\set{\of,\fif}\subseteq\varphi_{Tom}$, and  $\set{\of}\cup\varphi_a\in\varphi_{Tim}$.
In this case the given parallel reduction is also a typed parallel
reduction for this term.

\smallskip

 We can type the $\texttt{MITOCH}$ 
with the following derivations:

\[
\hspace{-1.2cm}
\prooftree
\begin{array}{l}
\prooftree
\begin{array}{l}
\prooftree
\begin{array}{l}
\TypePat{\T}{Mit_A \pipe \textit{ATP}}{\emptyset}
\end{array}
\justifies \TypePat{\T}{\Rmic}{\varphi_a} \using \lab{TRloc}
\endprooftree
\quad \TypeSeq{\T}{Tim}{\varphi_{Tim}} \quad
\varphi_a\subseteq\varphi_{Tim}
\end{array}
\justifies \TypePat{\T}{\Rpdi}{\set\fif}
\using \lab{TRlocIn}
\endprooftree
\quad
\TypePat{\T}{\Rau}{\set\of}
\end{array}
\justifies \TypePat{\T}{\Rpdi \pipe \Rau }{\set{\fif, \of}}
\using \lab{Tpar}
\endprooftree
\]

\[
\prooftree
\begin{array}{l}
\TypeSeq{\T}{Tom}{\varphi_{Tom}}\quad
\prooftree
\begin{array}{l}
\TypePat{\T}{\texttt{INN\_MITOCH}}{\emptyset}\quad
     \TypePat{\T}{\Rpdi \pipe \Rau }{\set{\fif, \of}}
\end{array}
\justifies \TypePat{\T}{\texttt{INN\_MITOCH}\pipe \Rpdi \pipe \Rau}{\set{\fif, \of}}
\using \lab{Tpar}
\endprooftree
\quad \set{\fif, \of} \subseteq \varphi_{Tom}
\end{array}
\justifies \TypePat{\T}{\Loop{Tom}\into(\texttt{INN\_MITOCH}\pipe \Rpdi \pipe \Rau)}{\emptyset}
\using \lab{Tcomp}
\endprooftree
\]

\noindent where we can type $\texttt{INN\_MITOCH}$ with:
\[
\hspace{-1.2cm}
\prooftree
\begin{array}{c}
\prooftree
\begin{array}{l}
\TypePat{\T}{Mit_A}{\emptyset}\quad
\prooftree
\begin{array}{l}
\TypePat{\T}{\textit{ATP}}{\emptyset} \quad
\TypeSeq{\T}{\xx}{\varphi_{Tom}} \quad \varphi_{Tom}\subseteq
\varphi_{Tom}
\end{array}
\justifies \TypePat{\T}{\Rau}{\conc {\set\of}{\emptyset}}
\using \lab{TRlocOut}
\endprooftree
\end{array}
\justifies \TypePat{\T}{Mit_A\pipe\Rau}{\set\of}
\using \lab{Tpar}
\endprooftree\\[2mm]
\TypeSeq{\T}{Tim}{\varphi_{Tim}}\quad\quad \set\of \subseteq \varphi_{Tim}
\end{array}
\justifies \TypePat{\T}{\Loop{Tim}\into(Mit_A\pipe\Rau)}{\emptyset}
\using \lab{Tcomp}
\endprooftree
\]

\end{example}

\section{Related Works and Conclusions}

{\em $\kappa$-calculus} is a formalism proposed by Danos and
Laneve~\cite{DanLan04} that idealises protein-protein interactions
using graphs and graph-rewriting operations. A protein is a node
with a fixed number of sites, that may be bound or free. Proteins
may be assembled into complexes by connecting two-by-two bound sites
of proteins, thus building connected graphs. Collections of proteins
and complexes evolve by means of reactions, which may create or
remove proteins and bounds between proteins: $\kappa$-calculus
essentially deals with complexations and decomplexations, where
complexation is a combination of substances into a new substance
called complex, and the decomplexation is the reaction inverse to
complexation, when a complex is dissociated into smaller parts.
These rules contain variables and are pattern-based, therefore may
be applied in different contexts. Even if $\kappa$-calculus does not
deal with membranes,
we have in common the use of variables and contexts for rule
application. Moreover, both approaches emphasise the key rule of the
surface components, in proteins ($\kappa$-calculus) or membranes
(CLSLR), for biological modelling.

{\em P-Systems} \cite{Pau02} are a biologically inspired
computational model. A P-System is formed by a membrane structure:
each membrane may contain molecules, represented by symbols of an
alphabet, other membranes and rules. The rules contained into a
membrane can be applied only to the symbols contained in the same
membrane: these symbols can be modified or moved across membranes.
The key feature of P-Systems is the maximal parallelism, i.e., in a
single evolution step all symbols in all membranes evolve in
parallel, and every applicable rule is applied as many times as
possible. Locality and intrinsic parallelism of rules are also
present in our approach, but in CLSLR the level of parallelism is
not necessarily maximal, and moreover not only molecules but also
rules can be created, deleted or moved across membranes. In both
approaches the local rules cannot describe some possible biological
behaviours such as fusion, deletion or creation of membranes.
P-Systems are not so flexible in the description of new activities
observed on membranes without extending the formalism to model such
activities.
In CLSLR this limit is overcome by global rules, that contain generic
patterns.

In rewrite system models, the term (describing the systems under
consideration) and the list of rules (describing the system's
evolution) could be considered as separate (written on two different
sheets of paper). In this work, we have presented a calculus with
global (separate from the system) and local (dynamic and system
intrinsic) rewrite rules. While global rules can, as usual, be
applied anywhere in a given term, local rules can only be applied in
the compartment on which they are defined. Local rules are equipped
with dynamic features: they can be created, moved and erased.


As it happens for P-Systems, local rules are intrinsically parallel.
Indeed, expressing rules that are local to well delimited
compartments, and with the possibility to define systems with
multiple, parallel, compartments, naturally leads to the definition
of a parallel semantics.

As a future work, in the lines of~\cite{BDGT12,DGT09b,Bio11}, we plan to investigate how to adapt this model
with a quantitative semantics, also studying the limits and
constraints imposed by a parallel semantics.

\bibliographystyle{eptcs}
\bibliography{biblio}

\begin{thebibliography}{10}
\providecommand{\bibitemdeclare}[2]{}
\providecommand{\urlprefix}{Available at }
\providecommand{\url}[1]{\texttt{#1}}
\providecommand{\href}[2]{\texttt{#2}}
\providecommand{\urlalt}[2]{\href{#1}{#2}}
\providecommand{\doi}[1]{doi:\urlalt{http://dx.doi.org/#1}{#1}}
\providecommand{\bibinfo}[2]{#2}

\bibitemdeclare{inproceedings}{ADT08}
\bibitem{ADT08}
\bibinfo{author}{Bogdan Aman}, \bibinfo{author}{Mariangiola Dezani-Ciancaglini}
  \& \bibinfo{author}{Angelo Troina} (\bibinfo{year}{2009}):
  \emph{\bibinfo{title}{Type Disciplines for Analysing Biologically Relevant
  Properties}}.
\newblock In: {\sl \bibinfo{booktitle}{Membrane Computing and Biologically
  Inspired Process Calculi (MeCBIC'08)}}. {\sl \bibinfo{series}{ENTCS}}
  \bibinfo{volume}{227}, \bibinfo{publisher}{Elsevier}, pp.
  \bibinfo{pages}{97--111}, \doi{10.1016/j.entcs.2008.12.106}.

\bibitemdeclare{article}{BDMMT10}
\bibitem{BDMMT10}
\bibinfo{author}{Roberto Barbuti}, \bibinfo{author}{Mariangiola
  Dezani-Ciancaglini}, \bibinfo{author}{Andrea Maggiolo-Schettini},
  \bibinfo{author}{Paolo Milazzo} \& \bibinfo{author}{Angelo Troina}
  (\bibinfo{year}{2010}): \emph{\bibinfo{title}{A Formalism for the Description
  of Protein Interaction}}.
\newblock {\sl \bibinfo{journal}{Fundamenta Informaticae}}
  \bibinfo{volume}{104}(\bibinfo{number}{1-4}), pp. \bibinfo{pages}{1--29},
  \doi{10.3233/FI-2010-316}.

\bibitemdeclare{inproceedings}{BarMagMil06}
\bibitem{BarMagMil06}
\bibinfo{author}{Roberto Barbuti}, \bibinfo{author}{Andrea Maggiolo-Schettini}
  \& \bibinfo{author}{Paolo Milazzo} (\bibinfo{year}{2006}):
  \emph{\bibinfo{title}{Extending the Calculus of Looping Sequences to Model
  Protein Interaction at the Domain Level}}.
\newblock In: {\sl \bibinfo{booktitle}{International Symposium on
  Bioinformatics Research and Applications (ISBRA'07)}}. {\sl
  \bibinfo{series}{LNBI}} \bibinfo{volume}{4463},
  \bibinfo{publisher}{Springer}, pp. \bibinfo{pages}{638--649},
  \doi{10.1007/978-3-540-72031-7_58}.

\bibitemdeclare{inproceedings}{BarMagMilTro06b}
\bibitem{BarMagMilTro06b}
\bibinfo{author}{Roberto Barbuti}, \bibinfo{author}{Andrea Maggiolo-Schettini},
  \bibinfo{author}{Paolo Milazzo} \& \bibinfo{author}{Angelo Troina}
  (\bibinfo{year}{2006}): \emph{\bibinfo{title}{Bisimulation Congruences in the
  Calculus of Looping Sequences}}.
\newblock In: {\sl \bibinfo{booktitle}{International Colloquium on Theoretical
  Aspects of Computing (ICTAC'06)}}. {\sl \bibinfo{series}{LNCS}}
  \bibinfo{volume}{4281}, \bibinfo{publisher}{Springer}, pp.
  \bibinfo{pages}{93--107}, \doi{10.1007/11921240_7}.

\bibitemdeclare{article}{BarMagMilTro06a}
\bibitem{BarMagMilTro06a}
\bibinfo{author}{Roberto Barbuti}, \bibinfo{author}{Andrea Maggiolo-Schettini},
  \bibinfo{author}{Paolo Milazzo} \& \bibinfo{author}{Angelo Troina}
  (\bibinfo{year}{2006}): \emph{\bibinfo{title}{A Calculus of Looping Sequences
  for Modelling Microbiological Systems}}.
\newblock {\sl \bibinfo{journal}{Fundamenta Informatic\ae}}
  \bibinfo{volume}{72}(\bibinfo{number}{1--3}), pp. \bibinfo{pages}{21--35},
  \doi{10.3233/FI-2006-79}.

\bibitemdeclare{inproceedings}{BMMT07}
\bibitem{BMMT07}
\bibinfo{author}{Roberto Barbuti}, \bibinfo{author}{Andrea Maggiolo-Schettini},
  \bibinfo{author}{Paolo Milazzo} \& \bibinfo{author}{Angelo Troina}
  (\bibinfo{year}{2007}): \emph{\bibinfo{title}{The Calculus of Looping
  Sequences for Modeling Biological Membranes}}.
\newblock In: {\sl \bibinfo{booktitle}{Workshop on Membrane Computing}}. {\sl
  \bibinfo{series}{LNCS}} \bibinfo{volume}{4860},
  \bibinfo{publisher}{Springer}, pp. \bibinfo{pages}{54--76},
  \doi{10.1007/978-3-540-77312-2_4}.

\bibitemdeclare{article}{BarMagMilTro08}
\bibitem{BarMagMilTro08}
\bibinfo{author}{Roberto Barbuti}, \bibinfo{author}{Andrea Maggiolo-Schettini},
  \bibinfo{author}{Paolo Milazzo} \& \bibinfo{author}{Angelo Troina}
  (\bibinfo{year}{2008}): \emph{\bibinfo{title}{Bisimulations in Calculi
  Modelling Membranes}}.
\newblock {\sl \bibinfo{journal}{Formal Aspects of Computing}}
  \bibinfo{volume}{20}(\bibinfo{number}{4-5}), pp. \bibinfo{pages}{351--377},
  \doi{10.1007/s00165-008-0071-x}.

\bibitemdeclare{article}{Bio11}
\bibitem{Bio11}
\bibinfo{author}{Livio Bioglio} (\bibinfo{year}{2011}):
  \emph{\bibinfo{title}{Enumerated type semantics for the calculus of looping
  sequences}}.
\newblock {\sl \bibinfo{journal}{RAIRO - Theoretical Informatics and
  Applications}} \bibinfo{volume}{45}(\bibinfo{number}{01}), pp.
  \bibinfo{pages}{35--58}, \doi{10.1051/ita/2011010}.

\bibitemdeclare{article}{BDGT12}
\bibitem{BDGT12}
\bibinfo{author}{Livio Bioglio}, \bibinfo{author}{Mariangiola
  Dezani-Ciancaglini}, \bibinfo{author}{Paola Giannini} \&
  \bibinfo{author}{Angelo Troina} (\bibinfo{year}{2012}):
  \emph{\bibinfo{title}{Typed stochastic semantics for the calculus of looping
  sequences}}.
\newblock {\sl \bibinfo{journal}{Theoretical Computer Science}}
  \bibinfo{volume}{431}, pp. \bibinfo{pages}{165--180},
  \doi{10.1016/j.tcs.2011.12.062}.

\bibitemdeclare{article}{busi}
\bibitem{busi}
\bibinfo{author}{Nadia Busi} (\bibinfo{year}{2007}):
  \emph{\bibinfo{title}{Using Well-structured Transition Systems to Decide
  Divergence for Catalytic P Systems}}.
\newblock {\sl \bibinfo{journal}{Theoretical Computer Science}}
  \bibinfo{volume}{372}, pp. \bibinfo{pages}{125--135},
  \doi{10.1016/j.tcs.2006.11.021}.

\bibitemdeclare{article}{DanLan04}
\bibitem{DanLan04}
\bibinfo{author}{Vincent Danos} \& \bibinfo{author}{Cosimo Laneve}
  (\bibinfo{year}{2004}): \emph{\bibinfo{title}{Formal Molecular Biology}}.
\newblock {\sl \bibinfo{journal}{Theoretical Computer Science}}
  \bibinfo{volume}{325}, pp. \bibinfo{pages}{69--110},
  \doi{10.1016/j.tcs.2004.03.065}.

\bibitemdeclare{inproceedings}{DGT09b}
\bibitem{DGT09b}
\bibinfo{author}{Mariangiola Dezani-Ciancaglini}, \bibinfo{author}{Paola
  Giannini} \& \bibinfo{author}{Angelo Troina} (\bibinfo{year}{2009}):
  \emph{\bibinfo{title}{{A Type System for a Stochastic CLS}}}.
\newblock In: {\sl \bibinfo{booktitle}{{Proc. of 4th Workshop on Membrane
  Computing and Biologically Inspired Process Calculi (MeCBIC), Bologna,
  Italy}}}. \bibinfo{volume}{11}, \bibinfo{publisher}{EPTCS}, pp.
  \bibinfo{pages}{91--105}, \doi{10.4204/EPTCS.11.6}.

\bibitemdeclare{inproceedings}{DGT09a}
\bibitem{DGT09a}
\bibinfo{author}{Mariangiola Dezani-Ciancaglini}, \bibinfo{author}{Paola
  Giannini} \& \bibinfo{author}{Angelo Troina} (\bibinfo{year}{2009}):
  \emph{\bibinfo{title}{A Type System for Required/Excluded Elements in CLS}}.
\newblock In: {\sl \bibinfo{booktitle}{Developments in Computational Models
  (DCM'09)}}. {\sl \bibinfo{series}{EPTCS}}~\bibinfo{volume}{9}, pp.
  \bibinfo{pages}{38--48}, \doi{10.4204/EPTCS.9.5}.

\bibitemdeclare{article}{DLTL06}
\bibitem{DLTL06}
\bibinfo{author}{Pavel Dolezal}, \bibinfo{author}{Vladimir Likic},
  \bibinfo{author}{Jan Tachezy} \& \bibinfo{author}{Trevor Lithgow}
  (\bibinfo{year}{2006}): \emph{\bibinfo{title}{Evolution of the Molecular
  Machines for Protein Import into Mitochondria}}.
\newblock {\sl \bibinfo{journal}{Science}}
  \bibinfo{volume}{313}(\bibinfo{number}{5785}), pp. \bibinfo{pages}{314--318},
  \doi{10.1126/science.1127895}.

\bibitemdeclare{phdthesis}{Mil07}
\bibitem{Mil07}
\bibinfo{author}{Paolo Milazzo} (\bibinfo{year}{2007}):
  \emph{\bibinfo{title}{Qualitative and Quantitative Formal Modeling of
  Biological Systems}}.
\newblock Ph.D. thesis, \bibinfo{school}{University of Pisa}.

\bibitemdeclare{article}{OP11}
\bibitem{OP11}
\bibinfo{author}{Nicolas Oury} \& \bibinfo{author}{Gordon Plotkin}
  (\bibinfo{year}{2012}): \emph{\bibinfo{title}{Multi-Level Modelling via
  Stochastic Multi-Level Multiset Rewriting}}.
\newblock {\sl \bibinfo{journal}{Mathematical Structures in Computer Science.}}
  \bibinfo{note}{To appear}.

\bibitemdeclare{book}{Pau02}
\bibitem{Pau02}
\bibinfo{author}{Gheorghe P{\u a}un} (\bibinfo{year}{2002}):
  \emph{\bibinfo{title}{Membrane Computing. {A}n Introduction}}.
\newblock \bibinfo{publisher}{Springer}.

\end{thebibliography}

\appendix
\section{APPENDIX}
\begin{lemma}[Inversion Lemma]\label{lemmaInversion}
\begin{enumerate}
 \item\label{LITSeps} If $\TypeSeq{ \T}\epsilon \tSeq$, then $\tSeq=\emptyseq$.
 \item\label{LITSvar} If $\TypeSeq{ \T}\GenericVariable \tSeq$, then 
  $\chi:\tSeq\in \T$.
 \item\label{LITSelm} If $\TypeSeq{ \T}a \tSeq$, then 
  $a:\tSeq\in \La$.
 \item\label{LITSseq} If $\TypeSeq{\T}{SP\cdot SP'}{\tSeq}$, then 
  $\TypeSeq{ \T}{SP}{\tSeq_1}$, $\TypeSeq{\T}{SP'}{\tSeq_2}$ and $\tSeq = \tSeq_1 \PG{\Rtunion} \tSeq_2$.
 \item\label{LITvar} If $\TypePat{\T}{X}{\tPat}$, then $X:\tPat\in\T$.
 \item\label{LITseq} If $\TypePat{\T}{SP}{\tPat}$, then $\tPat=\emptyseq$.
 \item\label{LITRloc} If $\TypePat{ \T} {\Lr_1 \mapsto \Lr_2} {\tPat}$, then $\tPat={\feature(\Lr_1 \mapsto \Lr_2)\Rtunion\tPat'}$ and  $\TypePat{ \T} {\Lr_2}{\tPat'}$.
 \item\label{LITRlocOut} If $ \TypePat{ \T} {\up{\Lr_1}{S_1} \mapsto \up{\Lr_2}{S_2}} {\tPat}$, then $\tPat=\conc {\set\of}{\tPat'}$, $\TypePat\T{L_2}{\tPat'}$, $\TypeSeq\T{S_1}{\tSeq_1}$, $\TypeSeq\T{S_2}{\tSeq_2}$ and $\tSeq_1\LB{\subseteq}\tSeq_2$.
 \item\label{LITRlocIn} If $\TypePat{ \T} {\down{\Lr_1}{S_1} \mapsto \down{\Lr_2}{S_2}} {\tPat}$, then $\tPat={\set\fif\Rtunion \tPat'}$, $\TypePat\T{L_2}{\conc{\tSeq}{\tPat'}}$, $\TypeSeq\T{S_1}{\tSeq_1}$, $\TypeSeq\T{S_2}{\tSeq_2}$ and $\tSeq \LB{\cup} \tSeq_1\LB{\subseteq}\tSeq_2$.
 \item\label{LITpar} If $\TypePat{\T}{P\pipe P'}{\tPat}$, then 
 $\tPat = \tPat_1 \Rtunion \tPat_2$, $\TypePat{ \T}{P}{\tPat_1}$ and $\TypePat{\T}{P'}{\tPat_2}$.
 \item\label{LITcomp} If $\TypePat{ \T}{\Loop{SP} \into P}{\tPat}$, then $\TypeSeq{ \T}{SP}{\tSeq}$, $\TypePat{\T}{P}{\conc{\tSeq'}{\tPat}}$ and $\tSeq' \LB{\subseteq} \tSeq$.
 \item\label{LITRglob} If $\TypeGR{ \T} {P_1 \mapsto P_2} {\Phi_2}$, then $\TypePat{\T } {P_1}{\tPat_1}$, $\TypePat\T{P_2}{\tPat_2}$ and $\tPat_2\Rtsub\tPat_1$.
\end{enumerate}
\begin{proof}
Immediate from 
the typing rules in Figures
\ref{fig_TyMR} and \ref{fig_TyR}.
\end{proof}
\end{lemma}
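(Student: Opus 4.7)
The plan is to prove the lemma by straightforward case analysis, exploiting the fact that the typing rules in Figures~\ref{fig_TyMR} and~\ref{fig_TyR} are \emph{syntax directed}: for each syntactic form of the subject of a judgment, at most one typing rule has a conclusion matching that form. Thus every derivation must end with the unique rule determined by the shape of the subject, and ``inverting'' the judgment amounts to reading off that rule's premises (together with the side conditions appearing in its conclusion).

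First I would dispatch the four sequence-pattern cases (\ref{LITSeps}--\ref{LITSseq}). A judgment $\TypeSeq{\T}{SP}{\tSeq}$ must end with \lab{TSeps}, \lab{TSvar}, \lab{TSelm}, or \lab{TSseq} according to whether $SP$ is $\epsilon$, a variable $\chi$, an element $a$, or a concatenation $SP_1\cdot SP_2$; in each case the desired conclusion is literally what the rule's conclusion and premises assert. The two trivial pattern cases (\ref{LITvar}) and (\ref{LITseq}) are handled the same way using \lab{Tvar} and \lab{Tseq}; note in particular that a sequence pattern $SP$ viewed as a pattern can only be typed by \lab{Tseq}, yielding $\tPat=\emptyseq$.

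The main work is for the remaining pattern cases (\ref{LITRloc})--(\ref{LITcomp}) and the global-rule case (\ref{LITRglob}). For each of these, I would identify the unique rule that can conclude a judgment with the given subject: \lab{TRloc} for $\Lr_1\mapsto\Lr_2$, \lab{TRlocOut} for $\up{\Lr_1}{S_1}\mapsto\up{\Lr_2}{S_2}$, \lab{TRlocIn} for $\down{\Lr_1}{S_1}\mapsto\down{\Lr_2}{S_2}$, \lab{Tpar} for $P\pipe P'$, \lab{Tcomp} for $\Loop{SP}\into P$, and \lab{TRglob} for $P_1\mapsto P_2$. In each case I would simply record the premises as the stated witnesses (e.g.\ for (\ref{LITRlocIn}) the premise $\TypePat{\T}{L_2}{\conc{\varphi}{\tau}}$ of \lab{TRlocIn} supplies the pattern type $\conc{\tSeq}{\tPat'}$ with $\tPat'=\tau$ and $\tSeq=\varphi$, while the side condition $\varphi\cup\varphi_1\subseteq\varphi_2$ gives the required containment).

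There is essentially no obstacle: since the relations $\vdash_s$, $\vdash_p$, and $\vdash_{gr}$ each have a separate set of rules, and within each set the conclusions of different rules cover disjoint syntactic shapes, the case analysis is exhaustive and mutually exclusive. The only thing to double-check is the closure under the structural congruence $\equiv$ (which is implicit in the typing judgments via the associated rule-application relations but does not appear in the typing rules themselves), and that the pattern-type sequences and the $\Rtunion$/$\Rtsub$ operators behave as required in the statements; both are immediate from the definitions given just before Figure~\ref{fig_TyR}.
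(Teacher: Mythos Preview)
Your proposal is correct and is precisely the approach the paper takes: the paper's entire proof is the single sentence ``Immediate from the typing rules in Figures~\ref{fig_TyMR} and~\ref{fig_TyR},'' which is exactly the syntax-directed case analysis you spell out. Your remark about structural congruence is harmless but unnecessary, since the typing rules themselves contain no congruence rule, so the derivations are purely syntactic and the inversion goes through without that consideration.
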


\begin{lemma}\label{lemmaHoleTyped}
If $\TypePat{\T}{E[P]}\tPat$ then
\begin{enumerate}
  \item\label{lemmaHoleTyped1} $\TypePat{\T}{P}{\tPat_0}$ for some $\tPat_0$, and
  \item\label{lemmaHoleTyped3} if $P'$ is such that $\TypePat{\T}{P'}{\tPat'}$ with $\tPat'\Rtsub\tPat_0$, then
  $\TypePat{\T}{E[P']}\tPat''$  with $\tPat''\Rtsub\tPat$.
\end{enumerate}
\begin{proof}
By induction on the definition of contexts.
\begin{itemize}
 \item If $E = \phole$, then $E[P] = P$, and so $\TypePat{\T}{P}{\tPat}$. Since in this case $E[P'] = P'$, and $\TypePat{\T}{P'}{\tPat'}$ with $\tPat'\Rtsub\tPat$ by hypothesis, then $\TypePat{\T}{E[P']}\tPat'$.
 \item  If $E = E' \pipe T$, then $E[P] = E'[P] \pipe T$. From Lemma \ref{lemmaInversion}(\ref{LITpar}) we derive $\TypePat{\T}{E'[P]}{\tPat_1}$ and $\TypePat{\T}{T}{\tPat_2}$, with $\tPat_1 \Rtunion \tPat_2 = \tPat$. By induction hypothesis on $E'[P]$ we get $\TypePat{\T}{P}{\tPat_0}$ and $\TypePat{\T}{E'[P']}\tPat'_1$  with $\tPat'_1\Rtsub\tPat_1$. Applying rule $\lab{Tpar}$ we conclude $\TypePat{\T}{E[P']}\tPat''$ with $\tPat'' = \tPat'_1 \Rtunion \tPat_2$, and then $\tPat''\Rtsub\tPat$.
 \item If $E = \Loop{S} \into E'$, then $E[P] = \Loop{S} \into E'[P]$. From Lemma \ref{lemmaInversion}(\ref{LITcomp}) we derive $\TypePat{\T}{S}{\tSeq_0}$, and $\TypePat{\T}{E'[P]}{\conc{\tSeq}{\tPat}}$, with $\tSeq \subseteq \tSeq_0$. By induction hypothesis on $E'[P]$ we get $\TypePat{\T}{P}{\tPat_0}$, and $\TypePat{\T}{E'[P']}{\conc{\tSeq'}{\tPat'}}$ with $\conc{\tSeq'}{\tPat'} \Rtsub \conc{\tSeq}{\tPat}$. Applying rule $\lab{Tcomp}$ we conclude $\TypePat{\T}{E[P']}\tPat'$, with $\tPat' \Rtsub \tPat$.
\end{itemize}
\end{proof}
\end{lemma}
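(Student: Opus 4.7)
The plan is to prove both clauses simultaneously by structural induction on the evaluation context $E$, following the grammar $E ::= \phole \agr E \pipe T \agr T \pipe E \agr \Loop{S} \into E$. Since $\Rtsub$ is defined pointwise by inclusion on each membrane-type component, it is transitive and compatible with $\Rtunion$ in the sense that if $\tPat_1 \Rtsub \tPat_1'$ then $\tPat_1 \Rtunion \tPat_2 \Rtsub \tPat_1' \Rtunion \tPat_2$; I would note this once at the start, since it is what makes the $\Rtsub$-monotonicity of the conclusion go through.

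For the base case $E = \phole$, the statement is trivial: $E[P] = P$ already gives clause (1) with $\tPat_0 = \tPat$, and for clause (2) the hypothesis $\TypePat{\T}{P'}{\tPat'}$ with $\tPat' \Rtsub \tPat$ directly yields $\TypePat{\T}{E[P']}{\tPat'}$. For $E = E' \pipe T$ (and symmetrically $E = T \pipe E'$), I apply the Inversion Lemma \ref{lemmaInversion}(\ref{LITpar}) to $\TypePat{\T}{E'[P] \pipe T}{\tPat}$, obtaining $\TypePat{\T}{E'[P]}{\tPat_1}$, $\TypePat{\T}{T}{\tPat_2}$ with $\tPat = \tPat_1 \Rtunion \tPat_2$. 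The induction hypothesis on $E'$ gives clause (1) directly, and for clause (2) it yields $\TypePat{\T}{E'[P']}{\tPat_1''}$ with $\tPat_1'' \Rtsub \tPat_1$; rule \lab{Tpar} rebuilds $\TypePat{\T}{E'[P'] \pipe T}{\tPat_1'' \Rtunion \tPat_2}$, and monotonicity of $\Rtunion$ with respect to $\Rtsub$ gives the required containment in $\tPat$.

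The more delicate case is $E = \Loop{S} \into E'$. By Inversion Lemma \ref{lemmaInversion}(\ref{LITcomp}), $\TypeSeq{\T}{S}{\tSeq}$ and $\TypePat{\T}{E'[P]}{\conc{\tSeq'}{\tPat}}$ with $\tSeq' \subseteq \tSeq$. Clause (1) follows from the induction hypothesis on $E'$. For clause (2), IH on $E'$ applied to $P'$ (using the same $\tPat_0$) yields $\TypePat{\T}{E'[P']}{\tPat^*}$ with $\tPat^* \Rtsub \conc{\tSeq'}{\tPat}$. Here is the key point: by the definition of $\Rtsub$, either $\tPat^* = \emptyseq$, or $\tPat^* = \conc{\tSeq''}{\tPat''}$ with $\tSeq'' \subseteq \tSeq'$ and $\tPat'' \Rtsub \tPat$. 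In either case I can write $\tPat^* = \conc{\tSeq''}{\tPat''}$ (taking $\tSeq'' = \emptyset$ and $\tPat'' = \emptyseq$ in the first subcase), with $\tSeq'' \subseteq \tSeq' \subseteq \tSeq$ by transitivity of $\subseteq$; so rule \lab{Tcomp} applies and yields $\TypePat{\T}{\Loop{S} \into E'[P']}{\tPat''}$ with $\tPat'' \Rtsub \tPat$, as required.

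The only substantive obstacle is bookkeeping in the $\Loop{S}\into E'$ case: one must check that the ``head'' component of the pattern type of $E'[P']$ really is bounded by $\tSeq$ (not just by $\tSeq'$) so that \lab{Tcomp} can be reapplied, which is exactly what transitivity of the componentwise $\subseteq$ buys us. Once this is observed, the rest is a routine verification that the typing rules \lab{Tpar} and \lab{Tcomp} are monotone with respect to $\Rtsub$ in the subterm being typed.
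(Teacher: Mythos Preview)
Your proposal is correct and follows essentially the same structural induction on evaluation contexts as the paper's own proof, using Lemma~\ref{lemmaInversion}(\ref{LITpar}) and~(\ref{LITcomp}) and rules \lab{Tpar}, \lab{Tcomp} in exactly the same way. You are in fact slightly more careful than the paper in the $\Loop{S}\into E'$ case: the paper tacitly assumes the type of $E'[P']$ delivered by the induction hypothesis has the shape $\conc{\tSeq'}{\tPat'}$, whereas you explicitly handle the degenerate possibility $\tPat^* = \emptyseq$ before reapplying \lab{Tcomp}.
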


\begin{lemma}\label{LemmaS}
If $\sigma\in\Sigma_\T$, then $\TypeSeq{}{SP\sigma}{\tSeq}$ if and
only if $\TypeSeq{\T}{SP}{\tSeq}$.
\begin{proof}
($\Leftarrow$) By induction on $\TypeSeq{\T}{SP}{\tSeq}$. Consider
the last applied rule.
\begin{itemize}
 \item If the rule is $\lab{TSvar}$, the proof follows from $\sigma\in\Sigma_\T$. For rules $\lab{TSeps}$ and $\lab{TSelm}$, the fact that $SP$ is a term implies that $SP\sigma = SP$, and, moreover, it is typable from the empty environment.
 \item Rule $\lab{TSseq}$. In this case $SP = SP_1\cdot SP_2$, and from Lemma \ref{lemmaInversion}(\ref{LITSseq}) we derive $\TypeSeq{ \T}{SP_1}{\tSeq_1}$, $ \TypeSeq{\T}{SP_2}{\tSeq_2}$, and $\tSeq = \tSeq_1\cup\tSeq_2$. By induction hypotheses on $SP_1$ and $SP_2$ we get $\TypeSeq{}{SP_1\sigma}{\tSeq_1}$ and $ \TypeSeq{}{SP_2\sigma}{\tSeq_2}$. Therefore, since $SP_1\sigma\cdot SP_2\sigma = (SP_1\cdot SP_2)\sigma$, applying the rule $\lab{TSseq}$ we conclude $\TypeSeq{}{(SP_1\cdot SP_2)\sigma}{\tSeq}$.
\end{itemize}
($\Rightarrow$) By induction on $SP$.
\begin{itemize}
 \item If $SP = \GenericVariable$, the proof follows from $\sigma\in\Sigma_\T$. If $SP = \epsilon$ or $SP = a$ we use weakening.
 \item Let $SP$ be $SP_1\cdot SP_2$. Since $(SP_1\cdot SP_2)\sigma = SP_1\sigma\cdot SP_2\sigma$,  from Lemma \ref{lemmaInversion}(\ref{LITSseq}) we derive $\tSeq = \tSeq_1\cup\tSeq_2$, $\TypeSeq{}{SP_1\sigma}{\tSeq_1}$, and $ \TypeSeq{}{SP_2\sigma}{\tSeq_2}$. By induction hypotheses we get $\TypeSeq{ \T}{SP_1}{\tSeq_1}$, and $ \TypeSeq{\T}{SP_2}{\tSeq_2}$. Applying rule ($TSseq$) we conclude $\TypeSeq{\T}{SP_1\cdot SP_2}{\tSeq}$.
\end{itemize}
\end{proof}
\end{lemma}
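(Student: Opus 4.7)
The plan is to prove the biconditional by two separate inductions, exactly as the statement naturally splits. For the ($\Leftarrow$) direction I would induct on the derivation $\TypeSeq{\T}{SP}{\tSeq}$, using the rules of Figure~\ref{fig_TyMR}. For ($\Rightarrow$) I would induct on the structure of $SP$, using Lemma~\ref{lemmaInversion} (clauses \ref{LITSeps}--\ref{LITSseq}) to peel off the outermost constructor. The two halves are very symmetric, but their inductions are on different objects because in the $\Rightarrow$ direction we only have a derivation for $SP\sigma$, and substitution can collapse a variable into a structured sequence, so inducting on the derivation would not give a useful IH for $SP$ itself.

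For ($\Leftarrow$), the three base cases are immediate: \lab{TSeps} and \lab{TSelm} give $SP\in\{\epsilon,a\}$ with $SP\sigma=SP$ and a type independent of $\T$, so the same derivation (with empty basis) works after instantiation; the case \lab{TSvar} is exactly what the definition of $\sigma\in\Sigma_\T$ provides, since $\chi:\tSeq\in\T$ forces $\TypeSeq{}{\sigma(\chi)}{\tSeq}$, and $\chi\sigma=\sigma(\chi)$. The inductive step \lab{TSseq} uses the syntactic identity $(SP_1\cdot SP_2)\sigma=SP_1\sigma\cdot SP_2\sigma$: apply the IH to each premise to obtain typings under the empty basis, then reapply \lab{TSseq} with the same union of types.

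For ($\Rightarrow$), the cases $SP=\epsilon$ and $SP=a$ are again immediate, since $SP\sigma=SP$ and we can weaken the empty-basis derivation to $\T$. The interesting case is $SP=\chi$: here $SP\sigma=\sigma(\chi)$, and from $\sigma\in\Sigma_\T$ we know that if $\chi:\tSeq'\in\T$ then $\TypeSeq{}{\sigma(\chi)}{\tSeq'}$. The step I would need to justify is that $\tSeq=\tSeq'$, i.e.\ that sequence patterns admit a unique type, which follows by a straightforward induction on sequence patterns since each typing rule in Figure~\ref{fig_TyMR} is deterministic in its conclusion. With that in hand, \lab{TSvar} yields $\TypeSeq{\T}{\chi}{\tSeq}$. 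The concatenation case $SP=SP_1\cdot SP_2$ applies inversion (clause~\ref{LITSseq}) to obtain $\TypeSeq{}{SP_1\sigma}{\tSeq_1}$ and $\TypeSeq{}{SP_2\sigma}{\tSeq_2}$ with $\tSeq=\tSeq_1\cup\tSeq_2$, then uses the IH on the two smaller patterns and reapplies \lab{TSseq}. The only real obstacle is the variable case in the forward direction, and it is handled either by the uniqueness remark above or, alternatively, by observing that once $\chi$ appears in $SP$, the basis $\T$ must contain $\chi$ with precisely the type derivable for $\sigma(\chi)$ under the empty basis, as this is the only information the instantiation agreement provides.
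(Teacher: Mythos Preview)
Your proposal is correct and follows essentially the same approach as the paper: induction on the derivation for ($\Leftarrow$) and structural induction on $SP$ for ($\Rightarrow$), with the same case analysis and the same use of Lemma~\ref{lemmaInversion}(\ref{LITSseq}) in the concatenation step. You are in fact more careful than the paper in the variable case of ($\Rightarrow$): the paper simply writes ``the proof follows from $\sigma\in\Sigma_\T$'', whereas you correctly observe that one must argue $\tSeq=\tSeq'$ via uniqueness of sequence typing---a detail the paper leaves implicit.
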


\begin{lemma}\label{LemmaT}
If $\sigma\in\Sigma_\T$, then $\TypePat{}{P\sigma}{\tPat}$ if and
only if $\TypePat{\T}{P}{\tPat}$.
\begin{proof}
($\Leftarrow$) By induction on $\TypePat{\T}{P}{\tPat}$. Consider
the last applied rule.
\begin{itemize}
 \item If the rule is $\lab{Tvar}$, the proof follows from $\sigma\in\Sigma_\T$. For rules $\lab{TRloc}$, $\lab{TRlocOut}$, $\lab{TRlocIn}$ the fact that $P$ is a rule implies that $P\sigma = P$ and, moreover, it is typable from the empty environment. \LB{For rule $\lab{Tseq}$ if $P$ is a sequence pattern then also $P\sigma$ is a sequence pattern, and then we can apply rule \lab{Tseq} with the empty environment.}
 \item If the rule is $\lab{Tpar}$, then $P = P_1\pipe P_2$, and from Lemma \ref{lemmaInversion}(\ref{LITpar}) we derive $\TypePat{ \T}{P_1}{\tPat_1}$, $\TypePat{\T}{P_2}{\tPat_2}$, and $\tPat = \tPat_1 \Rtunion \tPat_2$. By induction hypotheses on $P_1$ and $P_2$ we get $\TypePat{ }{P_1\sigma}{\tPat_1}$, and $ \TypePat{}{P_2\sigma}{\tPat_2}$. Therefore, since $P_1\sigma\pipe P_2\sigma = (P_1\pipe P_2)\sigma$, applying the rule $\lab{Tpar}$ we conclude $\TypePat{}{(P_1\pipe P_2)\sigma}{\tPat}$.
 \item If the rule is $\lab{Tcomp}$, then the proof is similar using Lemmas \ref{lemmaInversion}(\ref{LITcomp}) and \ref{LemmaS} for the first premise.
\end{itemize}
($\Rightarrow$) By induction on $P$.
\begin{itemize}
 \item If $P = X$, the proof follows from $\sigma\in\Sigma_\T$. \LB{If $P$ is a sequence pattern, then also $P\sigma$ is a sequence pattern, and we can apply the rule \lab{Tseq}. If $P$ is a rule, then $P = P\sigma$.}
 \item Let $P$ be $P = P_1\pipe P_2$. Since $(P_1\pipe P_2)\sigma = P_1\sigma\cdot P_2\sigma$, and the fact that $\TypePat{}{(P_1\pipe P_2)\sigma}{\tPat}$, from Lemma \ref{lemmaInversion}(\ref{LITpar}) we derive $\TypePat{}{P_1\sigma}{\tPat_1}$, $\TypePat{}{P_2\sigma}{\tPat_2}$, and $\tPat = \tPat_1 \Rtunion \tPat_2$. By induction hypotheses on $P_1$ and $P_2$ we get $\TypePat{ \T}{P_1}{\tPat_1}$ and $ \TypePat{\T}{P_2}{\tPat_2}$. Applying rule $\lab{Tpar}$ we conclude $\TypePat{\T}{(P_1\pipe P_2)}{\tPat}$.
 \item If $P = \Loop{SP} \into P'$ the proof is similar using Lemmas \ref{lemmaInversion}(\ref{LITcomp}) and \ref{LemmaS} for the first premise.
\end{itemize}
\end{proof}
\end{lemma}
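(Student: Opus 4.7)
The plan is to prove both implications by induction, reducing to the sequence-pattern analogue (Lemma~\ref{LemmaS}) and invoking Lemma~\ref{lemmaInversion} for inversion at each step. For the forward direction ($\Leftarrow$) I would induct on the derivation of $\TypePat{\T}{P}{\tPat}$ and dispatch on the last rule applied; for the converse ($\Rightarrow$) I would induct on the syntactic structure of $P$, inverting the typing judgement on $P\sigma$ at each step. The two directions are essentially symmetric, so most of the work is reused.

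The base cases go as follows. When $P = X$ is a term variable (rule \lab{Tvar}), the hypothesis $\sigma \in \Sigma_\T$ immediately gives $\TypePat{}{\sigma(X)}{\tPat}$ from $X{:}\tPat \in \T$, and conversely the fact that $\sigma(X)$ has type $\tPat$ lets us reinstate the basis entry. When $P$ is a sequence pattern $SP$, rule \lab{Tseq} forces $\tPat = \emptyseq$, and Lemma~\ref{LemmaS} transfers the underlying sequence typing between $\T$ and the empty basis in both directions. When $P$ is a local rule $R$, I would exploit the crucial fact, built into the definition of instantiation, that $R\sigma = R$: the same derivation tree can be reused, modulo a small additional observation about how the premises of \lab{TRloc}, \lab{TRlocOut}, \lab{TRlocIn} behave under change of basis.

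For the inductive cases the pattern is uniform. When $P = P_1 \pipe P_2$, Lemma~\ref{lemmaInversion}(\ref{LITpar}) splits the typing into typings of the two operands, the induction hypothesis is applied to each, and the result is recomposed by \lab{Tpar}, using $(P_1 \pipe P_2)\sigma = P_1\sigma \pipe P_2\sigma$. When $P = \Loop{SP}\into P'$, Lemma~\ref{lemmaInversion}(\ref{LITcomp}) yields a sequence typing on $SP$ (handled by Lemma~\ref{LemmaS}) and a pattern typing on $P'$ (handled by the induction hypothesis), reassembled via \lab{Tcomp}; here $(\Loop{SP}\into P')\sigma = \Loop{SP\sigma}\into P'\sigma$, which ensures the side condition on the wrapping sequence is preserved.

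The main obstacle I anticipate is the local-rule case. The $L$-subcomponents of a rule may themselves contain term variables, and the typing derivations of \lab{TRloc}, \lab{TRlocOut}, \lab{TRlocIn} use the basis $\T$ to type those subcomponents; since $R\sigma = R$ no substitution takes place inside, so one must argue that these inner subderivations continue to hold in the empty basis. The cleanest route is either to package a separate auxiliary claim restricted to the rule layer, or to observe that the pattern type of a rule is ultimately determined only by its \emph{features} together with recursively computed sub-rule types --- quantities that do not genuinely depend on the basis once the enclosing rule is kept intact by $\sigma$. Spelling out that observation precisely is where all the subtlety of the proof resides.
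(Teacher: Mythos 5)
Your proposal follows the paper's proof essentially step for step: ($\Leftarrow$) by induction on the derivation of $\TypePat{\T}{P}{\tPat}$ with a case split on the last rule, ($\Rightarrow$) by induction on the structure of $P$, with Lemma~\ref{LemmaS} discharging the sequence layer and Lemma~\ref{lemmaInversion} plus \lab{Tpar}/\lab{Tcomp} handling the two compound cases. The only place you diverge is the local-rule case, which you single out as the main obstacle and leave open; there the paper does \emph{not} perform the extra work you anticipate, but simply notes that $R\sigma = R$ and asserts that the rule ``is typable from the empty environment.'' Your unease about that step is warranted. It is harmless when the $\Lr$-components of the rule contain only sequence-level variables, since \lab{Tseq} types any sequence pattern without consulting the basis; but for a rule such as $\Lr_1 \mapsto X$ the premise of \lab{TRloc} requires $\TypePat{\T}{X}{\tPat}$ via \lab{Tvar}, which needs $X:\tPat\in\T$ and fails in the empty basis even though the $\T$-typing succeeds. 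So of your two proposed repairs, the second (that a rule's type depends only on its features and recursively computed sub-rule types) is not quite true precisely because of \lab{Tvar}, and the first (an auxiliary claim for the rule layer) would have to carry a restriction on term variables occurring inside local rules. In short: you have not missed a clever argument hidden in the paper --- the paper asserts exactly the point you hesitate over, and making it precise requires either restricting term variables inside local rules or weakening the rule case of the lemma.
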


\noindent
{\bf Proof of Theorem \ref{theoPreserveCorrect} (Subject Reduction)}

\noindent
By cases on the reduction rules.
\begin{description}
\item \textbf{Rule $\lab{TGR}$}

From Definition \ref{def_Tsem},
$T = E[P_1\sigma]$, $T' = E[P_2\sigma]$, and $\sigma \in
\Sigma_\T$. By hypothesis $\TypePat{}{T}\tPat$ and $\TypeGR\T{P_1
\mapsto P_2}{}$. Therefore, Lemma
\ref{lemmaHoleTyped}(\ref{lemmaHoleTyped1}) implies
$\TypePat{}{P_1\sigma}\tPat_1$ for some $\tPat_1$, and from Lemma
\ref{LemmaT} we derive $\TypePat{\T}{P_1}\tPat_1$. From
$\TypeGR\T{P_1 \mapsto P_2}{}$, Lemma
\ref{lemmaInversion}(\ref{LITRglob}) implies
$\TypePat{\T}{P_2}\tPat_2$ with $\tPat_2 \Rtsub \tPat_1$. We can
apply Lemma \ref{LemmaT} obtaining $\TypePat{}{P_2\sigma}\tPat_2$.
From Lemma \ref{lemmaHoleTyped}(\ref{lemmaHoleTyped3}) we conclude
$\TypePat{\T}{E[P_2\sigma]}\tPat'$ for some $\tPat'\Rtsub\tPat$.

\item \textbf{Rule $\lab{TLR}$}

From Definition \ref{def_Tsem}, $T = E[\Lr_1 \mapsto \Lr_2\pipe \Lr_1\sigma]$, $T' = E[\Lr_1 \mapsto \Lr_2\pipe \Lr_2\sigma]$, and $\sigma \in \Sigma_\T$. By hypothesis $\TypePat{}{T}\tPat$. Therefore, Lemma \ref{lemmaHoleTyped}(\ref{lemmaHoleTyped1}) implies $\TypePat{}{\Lr_1 \mapsto \Lr_2\pipe \Lr_1\sigma}\tPat_0$ for some $\tPat_0$. Since $\sigma \in \Sigma_\T$, Lemma \ref{LemmaT} implies $\TypePat{\T}{\Lr_1 \mapsto \Lr_2\pipe \Lr_1}\tPat_0$. From Lemma \ref{lemmaInversion}(\ref{LITpar}) we derive $\TypePat{\T}{\Lr_1 \mapsto \Lr_2}\tPat'_0$, and $\TypePat{\T}{\Lr_1}\tPat_1$ with $\tPat'_0 \Rtunion \tPat_1 = \tPat_0$. By Lemma \ref{lemmaInversion}(\ref{LITRloc}) we derive $\TypePat{\T}{\Lr_2}\tPat_2$ with $\tPat_2\Rtsub\tPat_1$. Lemma \ref{LemmaT} implies $\TypePat{}{\Lr_2\sigma}\tPat_2$, then from $\lab{Tpar}$ we derive $\TypePat{}{\Lr_1 \mapsto \Lr_2\pipe \Lr_2\sigma}\tPat_3$ with $\tPat_3 = \tPat'_0 \Rtunion \tPat_2$. Since $\tPat_3 \Rtsub \tPat_0$ we can apply Lemma \ref{lemmaHoleTyped}(\ref{lemmaHoleTyped3}) obtaining $\TypePat{}{E[\Lr_1 \mapsto \Lr_2\pipe \Lr_2\sigma]}\tPat'$ with $\tPat' \Rtsub
\tPat$.

\item \textbf{Rule $\lab{LR-Out}$}

From Definition \ref{def_Tsem}, $T = E[\Loop{S_1\sigma} \into (\up{\Lr_1}{S_1} \mapsto \up{\Lr_2}{S_2} \pipe \Lr_1\sigma \pipe T_0)]$, $T' = E[\Lr_2\sigma \pipe \Loop{S_2\sigma} \into (\up{\Lr_1}{S_1} \mapsto \up{\Lr_2}{S_2} \pipe T_0)]$, and $\sigma \in \Sigma_\T$. By hypothesis $\TypePat{}{T}\tPat$. Lemma \ref{lemmaHoleTyped}(\ref{lemmaHoleTyped1}) implies $\TypePat{}{\Loop{S_1\sigma} \into (\up{\Lr_1}{S_1} \mapsto \up{\Lr_2}{S_2} \pipe \Lr_1\sigma \pipe T_0)}\tPat_0$, and, since $\sigma \in \Sigma_\T$, Lemma \ref{LemmaT} implies $\TypePat{\T}{\Loop{S_1} \into (\up{\Lr_1}{S_1} \mapsto \up{\Lr_2}{S_2} \pipe \Lr_1 \pipe T_0)}\tPat_0$. By Lemma \ref{lemmaInversion}(\ref{LITcomp}) we get $\TypeSeq{\T}{S_1}{\tSeq_1}$, and $\TypePat{\T}{\up{\Lr_1}{S_1} \mapsto \up{\Lr_2}{S_2} \pipe \Lr_1 \pipe T_0}{\conc{\tSeq_0}{\tPat_0}}$ where $\tSeq_0\subseteq\tSeq_1$. By Lemma \ref{lemmaInversion}(\ref{LITpar}) we have $\TypePat{\T}{T_0}{\conc{\tSeq}{\tPat_1}}$, and $\TypePat{\T}{\up{\Lr_1}{S_1} \mapsto \up{\Lr_2}{S_2}}{\conc{\tSeq'}{\tPat_2}}$ for some $\conc{\tSeq}{\tPat_1} \Rtunion \conc{\tSeq'}{\tPat_2} \Rtsub
\conc{\tSeq_0}{\tPat_0}$.
Lemma \ref{lemmaInversion}(\ref{LITRlocOut}) implies $\tSeq'=\set\of$, $\TypePat\T{L_2}{\tPat_2}$ and $\TypeSeq\T{S_2}{\tSeq_2}$ where $\tSeq_1\subseteq\tSeq_2$. Since $\sigma \in \Sigma_\T$, Lemma \ref{LemmaT} implies $\TypePat{}{L_2\sigma}{\tPat_2}$, $\TypeSeq{}{S_2\sigma}{\tSeq_2}$, $\TypePat{\T}{\up{\Lr_1}{S_1} \mapsto \up{\Lr_2}{S_2}}{\conc{\set\of}{\tPat_2}}$, and $\TypePat{}{T_0}{\conc{\tSeq}{\tPat_1}}$. 
Using these premises, we apply rule $\lab{Tpar}$ deriving
$\TypePat{}{\up{\Lr_1}{S_1} \mapsto \up{\Lr_2}{S_2}\pipe
T_0}{\conc{\set\of}{\tPat_2}\Rtunion\conc{\tSeq}{\tPat_1}}$, and then
$\TypePat{}{\Loop{S_2\sigma} \into (\up{\Lr_1}{S_1} \mapsto
\up{\Lr_2}{S_2}\pipe T_0)}{\tPat_1\Rtunion\tPat_2}$ by rule
$\lab{Tcomp}$. Finally $\TypePat{}{\Lr_2\sigma \pipe
\Loop{S_2\sigma} \into (\up{\Lr_1}{S_1} \mapsto \up{\Lr_2}{S_2}
\pipe T_0)}{\tPat_1\Rtunion\tPat_2}$ by rule $\lab{Tpar}$: since
$\tPat_1\Rtunion\tPat_2 \Rtsub \tPat_0$, we can apply the Lemma
\ref{lemmaHoleTyped}(\ref{lemmaHoleTyped3}), obtaining
$\TypePat{}{E[\Lr_2\sigma \pipe \Loop{S_2\sigma} \into
(\up{\Lr_1}{S_1} \mapsto \up{\Lr_2}{S_2} \pipe T_0)]}{\tPat'}$ with
$\tPat' \Rtsub \tPat$.

 \item \textbf{Rule $\lab{LR-In}$}

 From Definition \ref{def_Tsem}, $T = E[\down{\Lr_1}{S_1} \mapsto \down{\Lr_2}{S_2} \pipe \Lr_1\sigma \pipe \Loop{S_1\sigma} \into T_0]$, $T' = E[\down{\Lr_1}{S_1} \mapsto \down{\Lr_2}{S_2} \pipe \Loop{S_2\sigma} \into (T_0 \pipe \Lr_2\sigma)]$, and $\sigma \in \Sigma_\T$. By hypothesis $\TypePat{}{T}\tPat$. Lemma \ref{lemmaHoleTyped}(\ref{lemmaHoleTyped1}) implies $\TypePat{}{\down{\Lr_1}{S_1} \mapsto \down{\Lr_2}{S_2} \pipe \Lr_1\sigma \pipe \Loop{S_1\sigma} \into T_0}\tPat_0$, and, since $\sigma \in \Sigma_\T$, Lemma \ref{LemmaT} implies $\TypePat{\T}{\down{\Lr_1}{S_1} \mapsto \down{\Lr_2}{S_2} \pipe \Lr_1 \pipe \Loop{S_1} \into T_0}\tPat_0$. By Lemma \ref{lemmaInversion}(\ref{LITpar}) we have $\TypePat{\T}{\down{\Lr_1}{S_1} \mapsto \down{\Lr_2}{S_2}}{\tPat_1}$ and $\TypePat{\T}{\Loop{S_1} \into T_0}{\tPat_2}$ for some $\tPat_1\Rtunion\tPat_2\Rtsub\tPat_0$.
 Lemma \ref{lemmaInversion}(\ref{LITRlocIn}) implies $\tPat_1 = \set\fif \Rtunion \tPat_3$, $\TypeSeq\T{S_1}{\tSeq_1}$, $\TypeSeq\T{S_2}{\tSeq_2}$, and $\TypePat\T{L_2}{\conc{\tSeq}{\tPat_3}}$, where $\tSeq\cup\tSeq_1\subseteq\tSeq_2$.  By  Lemma \ref{lemmaInversion}(\ref{LITcomp}) $\TypePat{\T}{T_0}{\conc{\tSeq'}{\tPat_2}}$ for some $\tSeq'\subseteq\tSeq_1$. Since $\sigma \in \Sigma_\T$, Lemma \ref{LemmaT} implies  $\TypeSeq{}{S_1\sigma}{\tSeq_1}$, $\TypeSeq{}{S_2\sigma}{\tSeq_2}$, $\TypePat{}{L_2\sigma}{\conc{\tSeq}{\tPat_3}}$, $\TypePat{}{\down{\Lr_1}{S_1} \mapsto \down{\Lr_2}{S_2}}{\set\fif \Rtunion \tPat_3}$, and $\TypePat{}{T_0}{\conc{\tSeq'}{\tPat_2}}$. Using these premises, we apply rule $\lab{Tpar}$ deriving $\TypePat{}{L_2\sigma\pipe T_0}{\conc{\tSeq}{\tPat_3}\Rtunion\conc{\tSeq'}{\tPat_2}}$, and then $\TypePat{}{\Loop{S_2\sigma} \into L_2\sigma\pipe T_0}{\tPat_3\Rtunion\tPat_2}$  by rule $\lab{Tcomp}$. Finally $\TypePat{}{\down{\Lr_1}{S_1} \mapsto \down{\Lr_2}{S_2} \pipe \Loop{S_2\sigma} \into L_2\sigma\pipe T_0}{\tPat_1\Rtunion\tPat_2}$, because $\tPat_1 = \set\fif\Rtunion\tPat_3$, by rule $\lab{Tpar}$: since $\tPat_1\Rtunion\tPat_2 \Rtsub \tPat_0$, we can apply the Lemma \ref{lemmaHoleTyped}(\ref{lemmaHoleTyped3}) obtaining $\TypePat{}{E[\down{\Lr_1}{S_1} \mapsto \down{\Lr_2}{S_2} \pipe \Loop{S_2\sigma} \into L_2\sigma\pipe T_0]}{\tPat'}$ for some $\tPat'\Rtsub\tPat$.

 \end{description}

\end{document}